\newcommand{\todo}[1]{\sethlcolor{red}\hl{TODO: #1}}
\newcommand{\eat}[1]{}
\newtheorem{thm}{Theorem}
\newtheorem{cor}[thm]{Corollary}
\newtheorem{lemma}{Lemma}
\newtheorem*{definition*}{Definition}
\newtheorem{problem}{Problem}
\newtheorem*{obs*}{Observation}
\DeclarePairedDelimiter{\ceil}{\lceil}{\rceil}
\definecolor{pinegreen}{cmyk}{0.92,0,0.59,0.25}
\definecolor{royalblue}{cmyk}{1,0.50,0,0}
\definecolor{lavander}{cmyk}{0,0.48,0,0}
\definecolor{violet}{cmyk}{0.79,0.88,0,0}
\tikzstyle{cblue}=[circle, draw, thin,fill=cyan!20, scale=0.8]
\tikzstyle{qgre}=[rectangle, draw, thin,fill=green!20, scale=0.8]
\tikzstyle{rpath}=[thin, red, dashed, opacity=0.1]
\tikzstyle{legend_isps}=[rectangle, rounded corners, thin, 
\tikzstyle{labelt}=[rectangle, rounded corners, thin,
\tikzstyle{labelb}=[rectangle, rounded corners, thin,
\tikzstyle{labelv}=[rectangle, rounded corners, thin,
\newcommand{\vsa}{\vspace*{-0.1cm}}
\newcommand{\vsb}{\vspace*{-0.2cm}}
\newcommand{\vsc}{\vspace*{-0.4cm}}
\newcommand{\mouralg}{\text{PHASR}}
\newcommand{\ouralg}{$\mouralg$\xspace}
\begin{document}
\title{Local Community Detection in Dynamic Networks}

\author{
\IEEEauthorblockN{\begin{tabular*}{0.81\textwidth}{@{\extracolsep{\fill} }c c c}
Daniel J. DiTursi\IEEEauthorrefmark{1}\IEEEauthorrefmark{2} & Gaurav Ghosh\IEEEauthorrefmark{1} & 
Petko Bogdanov\IEEEauthorrefmark{1}\end{tabular*}}

\IEEEauthorblockA{\begin{tabular*}{0.72\textwidth}{@{\extracolsep{\fill} }c c}
\parbox[t]{0.35\textwidth}{\centering\IEEEauthorrefmark{1}Department of Computer Science\\
State University of New York at Albany\\
Albany, NY  12222 } &
\parbox[t]{0.35\textwidth}{\centering\IEEEauthorrefmark{2}Department of Computer Science\\
Siena College\\
Loudonville, NY  12211} 
\end{tabular*}}
}
\maketitle

\begin{abstract}
Given a time-evolving network, how can we detect communities over periods of high internal and low external interactions? To address this question we generalize traditional local community detection in graphs to the setting of dynamic networks. Adopting existing static-network approaches in an ``aggregated'' graph of all temporal interactions is not appropriate for the problem as dynamic communities may be short-lived and thus lost when mixing interactions over long periods. Hence, dynamic community mining requires the detection of both the community nodes and an optimal time interval in which they are actively interacting.  

We propose a filter-and-verify framework for dynamic community detection. To scale to long intervals of graph evolution, we employ novel spectral bounds for dynamic community conductance and employ them to filter suboptimal periods in near-linear time. We also design a time-and-graph-aware locality sensitive hashing family to effectively spot promising community cores. 
Our method \ouralg discovers communities of consistently higher quality ($2$ to $67$ times better) than those of baselines. At the same time, our bounds allow for pruning between $55\%$ and $95\%$ of the search space, resulting in significant savings in running time compared to exhaustive alternatives for even modest time intervals of graph evolution. 
\end{abstract}
\vsb
\section{Introduction}
\vsa
\emph{Given a large network with entities interacting at different times, how can we detect communities of intense internal and limited external interactions over a period?} Temporal network interaction data abounds, hence, answering the question above can inform decisions in a wide range of settings.
A set of computers that do not typically interact extensively suddenly exhibits a spike in network traffic---this burst could be the activation of a botnet and warrants special attention from network administrators~\cite{Goel2006}. 
Similarly, prior to a major project deadline, members of a team may communicate more and exclusively among each other compared to other times~\cite{Kleinberg2002}. Knowledge of such periods and the involved parties can lead to better communication systems by accordingly ranking temporally and contextually important messages.

In the example of Fig.~\ref{fig:ex} nodes $\{2,3,4\}$ induce a strong community at times $t_2$-$t_3$ due to multiple high-weight (thick line) internal and weaker external interactions. Intuitively, including more nodes or extending this community in time can only make it less exclusive. Detection of such dynamic communities can be viewed as a generalization of local community detection~\cite{Andersen2006,Andersen2009}, where locality is enforced in both (i) the graph domain: local as opposed to complete partitioning; and (ii) the time domain: communities exhibit bursty internal interactions in a contiguous time interval. For example, in a log of cellular tower interactions in the city of Milan, we detect just such a bursty temporal community near a major highway during hours corresponding to the morning commute. (See Fig.~\ref{fig:Mmap}.)

Unlike evolutionary clustering~\cite{Facetnet,Kim09,berger2010dynamic}, whose goal is to partition all vertices at every timestamp, our goal is to identify the most cohesive communities and their interval of activity without clustering all nodes in time. Hence, our problem is more similar to local community detection~\cite{Andersen2006,Fortunato2010,takaffoli2013incremental} than partitioning. In addition, evolutionary clustering methods are often concerned with the long-term group membership evolution: how partitions appear, grow, shrink and disappear~\cite{berger2010dynamic}. Instead, we focus on interaction bursts among a temporally stable group of nodes.

While local dynamic community detection has practical applications, it also presents non-trivial challenges. Even in static graphs, many local community measures involving cuts are NP-hard to optimize, including conductance~\cite{Sima2006}, modularity~\cite{brandes2006maximizing}, ratio cut~\cite{Wei1989}, and normalized cut~\cite{Shi2000}. 
Furthermore, to detect the active period of a dynamic community one needs to consider a quadratic number of possible intervals. Aggregating all interactions and resorting to static community detection approaches~\cite{Andersen2006,Fortunato2010} may occlude dynamic communities due to mixing interactions from different periods. Alternatively, consideration of individual timestamps in isolation may fragment the community in time. 
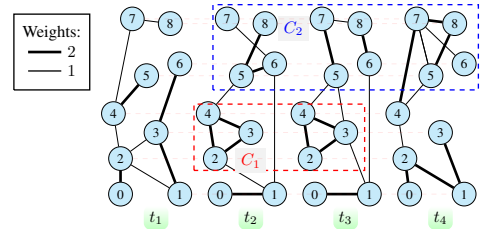
\begin{figure}[t]
\centering
 \resizebox{0.35\textwidth}{!}{%
 \begin{tikzpicture}[auto, thick]

  \foreach \place/\x in {{(-0.55,-2.5)/0}, {(0.7,-2.5)/1}, {(-0.55,-1.75)/2},{(0.2,-1.2)/3},
    {(-0.7,-0.8)/4}, {(0,0)/5}, {(0.7,0.25)/6}, {(-0.3,1.2)/7}, 
    {(0.5,1.1)/8}}
  \node[cblue] (a1\x) at \place {\x};
  \path[ultra thick] (a10) edge (a12);
  \path[thin] (a11) edge (a12);
  \path[ultra thick] (a11) edge (a13);
  \path[thin] (a12) edge (a13);
  \path[thin] (a12) edge (a14);
  \path[thin] (a14) edge (a17);
  \path[ultra thick] (a13) edge (a16);
  \path[ultra thick] (a15) edge (a14);
  \path[thin] (a17) edge (a18);
  
  \foreach \place/\x in {{(1.55,-2.5)/0}, {(2.7,-2.5)/1},{(1.45,-1.75)/2},{(2.2,-1.2)/3},
    {(1.3,-0.8)/4}, {(2,0)/5}, {(2.7,0.25)/6}, {(1.7,1.2)/7},
    {(2.5,1.1)/8}}
    \node[cblue] (a2\x) at \place {\x};
    
  \path[ultra thick] (a20) edge (a21);
  \path[thin] (a21) edge (a26);
  \path[thin] (a21) edge (a22);
  \path[ultra thick] (a22) edge (a23);
  \path[ultra thick] (a23) edge (a24);
  \path[ultra thick] (a22) edge (a24);
  \path[ultra thick] (a25) edge (a26);
  \path[thin] (a25) edge (a24);
  \path[thin] (a26) edge (a27);
  \path[ultra thick] (a25) edge (a28);
    
  \foreach \place/\x in {{(3.55,-2.5)/0}, {(4.7,-2.5)/1},{(3.45,-1.75)/2},{(4.2,-1.2)/3},
    {(3.3,-0.8)/4}, {(4,0)/5}, {(4.7,0.25)/6}, {(3.7,1.2)/7},
    {(4.5,1.1)/8}}
    \node[cblue] (a3\x) at \place {\x};
    
  \path[ultra thick] (a30) edge (a31);
  \path[thin] (a31) edge (a33);
  \path[thin] (a31) edge (a36);
  \path[ultra thick] (a32) edge (a33);
  \path[ultra thick] (a33) edge (a34);
  \path[ultra thick] (a32) edge (a34);
  \path[thin] (a34) edge (a33);
  \path[thin] (a35) edge (a34);
  \path[thin] (a35) edge (a33);
  \path[ultra thick] (a35) edge (a37);
  \path[thin] (a37) edge (a38);
  \path[ultra thick] (a36) edge (a38);
    
  \foreach \place/\x in {{(5.55,-2.5)/0}, {(6.7,-2.5)/1}, {(5.45,-1.75)/2},{(6.2,-1.2)/3},
    {(5.3,-0.8)/4}, {(6,0)/5}, {(6.7,0.25)/6}, {(5.7,1.2)/7}, 
    {(6.5,1.1)/8}}
  \node[cblue] (a4\x) at \place {\x};
  
   \path[ultra thick] (a40) edge (a42);
 \path[ultra thick] (a41) edge (a42);
  \path[thin] (a47) edge (a45);
  \path[ultra thick] (a41) edge (a43);
  \path[thin] (a42) edge (a44);
  \path[ultra thick] (a44) edge (a47);
  \path[thin] (a47) edge (a46);
  \path[ultra thick] (a45) edge (a48);
  \path[thin] (a45) edge (a44);
  \path[thin] (a45) edge (a47);
  \path[ultra thick] (a47) edge (a48);
    
 \draw[red,thick,dashed] (1,-2) --  (1,-0.6) -- (4.6,-0.6) -- (4.6,-2) -- (1,-2);
 \node[red, fill=gray!10] at (2.2,-1.8){$C_1$};
 \draw[blue,thick,dashed] (1.4,-0.3) --  (1.4,1.5) -- (7,1.5) -- (7,-0.3) -- (1.4,-0.3);
 \node[blue, fill=gray!10] at (3.1,1){\bf{$C_2$}};
 
 \node[] at (-2,0.9){Weights:};
 \draw[black,ultra thick] (-2.5,0.5) -- (-1.8,0.5); \node[right] at (-1.8,0.5){$2$};
 \draw[black,thick] (-2.5,0.1) -- (-1.8,0.1); \node[right] at (-1.8,0.1){$1$};
 \draw (-2.8,-0.3) --  (-1.2,-.3) -- (-1.2,1.3) -- (-2.8,1.3) -- (-2.8,-.3);
 
 
  \edef\jpo{0}
  \foreach \i in {1,...,8}
  {
    \foreach \j in {1,...,3}
    {
      \pgfmathparse{int(\j+1)}
      \edef\jpo{\pgfmathresult}
      \path[rpath] (a\j\i) edge (a\jpo\i);
    }
   }
 
  \node[labelt] at (0.2,-3){$t_1$};
  \node[labelt] at (2.2,-3){$t_2$};
  \node[labelt] at (4.2,-3){$t_3$};
  \node[labelt] at (6.2,-3){$t_4$};
\end{tikzpicture}
}\vsb
\caption{Interactions in a small graph over $4$ time steps. Thicker edges designate stronger interactions. Two temporal communities $C_1$ across $t_2$-$t_3$ and $C_2$ across $t_2$-$t_4$ are designated in dashed boxes.} 
\label{fig:ex}
\vsb\vsb\vsa
\end{figure}
\begin{figure}[t]
\centering
\hspace{-0.1in}
\includegraphics[width=0.45\textwidth]{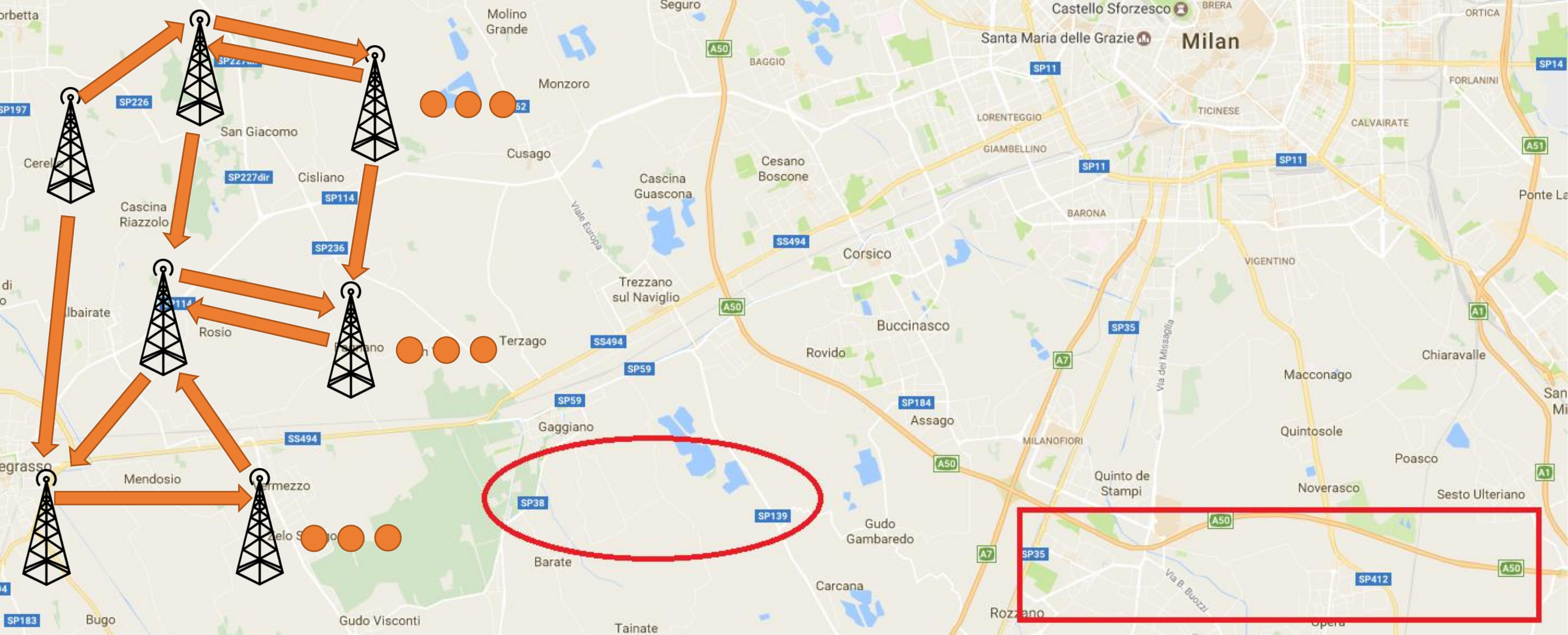}\vsa
\caption{Map of Milan and surroundings. The marked areas contain the cell towers of the top discovered communities in the {\em Call\em} data set.  These communities exist in the early morning hours and likely correspond to morning commute along Milan's beltway.}\vsb\vsb
\label{fig:Mmap}
\end{figure}

We propose a \emph{Prune, HASh and Refine (\ouralg)} approach for the problem of \emph{temporal community detection}. We \emph{prune} infeasible time periods based on novel spectral lower bounds for the graph conductance tailored to the dynamic graph setting. We show that pruning all $O(T^2)$ possible intervals can be performed in time $O(T\log{T})$ due to an interval grouping scheme exploiting the similarity of overlapping time intervals. In order to efficiently \emph{spot} candidate community nodes in non-pruned time intervals, we design a \emph{time-and-graph-aware} locality sensitive hashing scheme to group similar temporal neighborhoods of community nodes in linear time. Our hashing scheme can be configured to maximize the probability of spotting communities of a target duration informed by the pruning step. In the \emph{refinement} step, we expand communities rooted in candidate nodes in time.
Our contributions in this work are as follows:\\
    \noindent $\bullet$ We propose novel spectral bounds for temporal community conductance and an efficient scheme to compute them using $O(T\log{T})$ eigenvalue computations. Our bounds enable pruning of more than $95\%$ of the time intervals in synthetic and more than $50\%$ in real-world instances; and are trivial to parallelize.  \\
    \noindent $\bullet$ We propose a joint time-and-graph, locality-sensitive family of functions and employ them in an effective scheme for spotting temporal community seeds in linear time. Our LSH scheme enables the discovery of 2 to 67 times better communities compared to those discovered by baselines. \\
    \noindent $\bullet$ \ouralg scales to synthetic and real-world networks of sizes that render exhaustive alternatives infeasible. This dominating performance is enabled by effective pruning and high true positive rate of candidates produced by our hashing scheme.

\vsb
\section{Related work}
\vsa
\noindent{\bf Static local communities:} Our work is different from static community detection~\cite{xie2013overlapping,Andersen2006,macropol2010scalable,Fortunato2010,leskovec2010empirical} in that we consider a dynamic setting. We compare to temporal generalization of the method in~\cite{Andersen2006} since it similarly focuses on the subgraph conductance of a community and the method in~\cite{macropol2010scalable} as it employs hashing for static network communities. Our experiments demonstrate that naive extensions of the above methods do not scale well with time. 

\noindent{\bf Dense subgraph} detection has been recently considered for unweighted interactions in time~\cite{rozenshtein2014,gaumont2016finding, rossetti2016tiles}. Such methods allow for analysis at the maximal temporal resolution in which only fragments of the community may be available at an instance and thus require parameters that predefine the total span of a community~\cite{rozenshtein2014} or some notion of persistence (e.g. time-to-live interval) for interaction edges~\cite{gaumont2016finding, rossetti2016tiles}. In addition, these works consider the density within a community, but not its separation from the rest of the network, i.e. its cut. In contrast, in our setting we focus on communities that are well-separated from the rest of the network, we do not require predefined persistence/span, and we allow for weights of edges modelling varying strength of interactions in time. 

\noindent{\bf Persistent subgraphs} in time have also been considered~\cite{liu2014persistent,Ahmed2011}. Similar to this work, the methods in this category require that the subgraph of interest persists as either a conserved topology~\cite{Ahmed2011} or as a stable level of intra-community interactions. These objectives are different and even complementary to ours in that they do not consider how well-separated the community is from the rest of the network. The persistence requirements can be considered in conjunction with conductance to further rank candidates of interest.   

\noindent{\bf High-weight temporal subgraph} detection is another relevant setting with the goal of detecting connected subgraphs which optimize a function of their node or edge weights in time have also been considered~\cite{liu2014persistent,Ahmed2011,MEDEN}. Similar to this work, the methods in this category require that the subgraph of interest persists as either a conserved topology~\cite{Ahmed2011} or as a stable level of intra-community interactions. These objectives are different from ours in that they do not consider how well-separated the community is from the rest of the network; the focus is only on high internal weights. The persistence requirements could be considered in conjunction with conductance to further rank candidates of interest.   

\noindent{\bf Evolutionary clustering }for dynamic networks is another related and very active area of research~\cite{Facetnet,Kim09,berger2010dynamic,takaffoli2013incremental}. The goal in evolutionary clustering is to track the changes in the global network partitions over time, where partitions are allowed to vary from one time slice to the next by incorporating temporal smoothness of partition membership. The general problem setting, however, differs from ours as the goal is to partition all vertices at every timestamp as opposed to identifying the best local communities and their interval of activity. Closest to our goal from this group is the local community method by Takafolli et al.~\cite{takaffoli2013incremental} which extends local communities of good modularity from one time-step to the next. In comparison, our approach considers the full timeline as opposed to only consecutive time steps, and as a result consistently finds lower-conductance communities than that of Takafolli et al.~\cite{takaffoli2013incremental}.

\vsb
\section{Problem definition}
\vsa
Our goal is to find communities of stable membership over a period of time during which members interact mostly among each other as opposed to with the rest of the network. To model this intuition we propose the \emph{temporal conductance} measure, a natural extension to graph conductance which is commonly adopted for local communities in static graphs~\cite{Sima2006,Andersen2006}. Our problem can then be cast as detecting the subgraph and interval of smallest temporal conductance, formalized next. 

Let $G(V,E,W)$ be an undirected edge-weighted {\em temporal graph\em}, where $V$ is the set of vertices, $E \subseteq V \times V$ is the set of edges and $W$ is a family of weight functions $W: E \times T \rightarrow \mathbb{R}^+$ mapping edges to real values across a discrete {\em timeline\em} $\mathcal{T}=\{0,1,\ldots,|T|-1\}$ of graph evolution. We will use $w(u,v,t)$ to denote the weight of an edge $(u,v)$ at time $t$ and $w(u,v,t,t')=\sum_{i=t}^{t'}w(u,v,i)$ to denote the aggregate (temporal) weight on the same edge in the interval $[t,t']$. The temporal volume of a node $u$ is defined as $vol(u,t,t')=\sum_{(u,v)\in E}w(u,v,t,t')$. 
A {\em temporal community\em} $(C,t,t')$ is a connected subgraph of $G$ induced by nodes $C \subseteq V$ and weighted by $w(u,v,t,t'), \forall (u,v) \in E \cap (C \times C)$. 
The {\em temporal conductance,\em} of a community $(C,t,t')$, a generalization of the classic conductance~\cite{Sima2006}, is defined as: \vsa
$$ \phi(C,t,t') = \eta(t,t')\frac{cut(C,t,t')}{\min(vol(C,t,t'), vol(\bar{C},t,t'))},\vsb$$
where $\bar{C}=V\setminus C$ ; 
$cut(C,t,t') = \sum_{u \in C,v \in \bar{C} }w(u,v,t,t')$
is the temporal cut of $C$; and $\eta(t,t')$ is a temporal normalization factor. The smaller the conductance, the more cohesive the community. 

If $\eta(t,t')$ is a constant, the temporal conductance reduces to the regular graph conductance of $C$ on an aggregated network in the interval $[t,t']$. However, without normalization the conductance will favor small communities in single timestamps, thus fragmenting a natural community in time. Hence, we consider a temporal normalization $\eta(t,t')=(t'-t)^{-\alpha}$, where $\alpha$ controls the importance of community time extent. Our methods can trivially accommodate different forms of the normalization function, e.g. exponential time decay similar to that used in streaming settings~\cite{sharan2008temporal,yu2013anomalous}. 

To demonstrate the effect of normalization, consider $C_1$ and $C_2$ in Fig.~\ref{fig:ex}. When $\alpha=0$ (i.e. no normalization), the conductance of $C_1$ is $\phi(C_1,2,3)=5/29=0.17$, while $\phi(C_2,2,4)=8/39=0.21$ (weights considered). Upon increasing the normalization (say $\alpha=1$), and hence the preference for longer-lasting communities, the temporal conductance of $C_2$ becomes lower than that of $C_1$.



\begin{problem}\vsa
{\bf [Lowest temporal conductance community]} \\ Given a dynamic network $G(V,E,W)$, find the community: \\$(C_o,t_o,t_o')=\arg\min_{C\in V, 0\leq t\leq t'\leq T}\phi(C,t,t')$.\vsa
\end{problem}
The \emph{lowest conductance} problem in a static graph is known to be NP-hard~\cite{Sima2006} and since a dynamic graph of a single timestamp $T = 1$ is equivalent to the static case, our problem of \emph{temporal conductance minimization} is also NP-hard. Hence, our focus is on (i)~scalable processing of dynamic graphs over long timelines; and (ii)~effective and efficient detection of community seeds in the graph and time which existing approximate solutions for the static case require as input~\cite{Andersen2006,Andersen2009}.



\vsb
\section{Methods}
\vsa
Since our problem is NP-hard, exhaustive approaches would not scale to large real-world dynamic networks. We experimentally demonstrate that na{\"i}ve heuristics are infeasible in all but trivially-small instances. Thus, our overall method enables scalability (i)~with the graph size by identifying and refining candidate seed nodes in time that are likely to participate in low-conductance communities (\emph{Seed selection~\ref{sec:seed}}); and (ii)~with the length of the timeline by pruning infeasible periods in time with guarantees (\emph{Pruning Sec.~\ref{sec:prune}}). Our final approach is presented in Sec.~\ref{sec:alg}. 



\vsb
\subsection{Preliminaries}
\vsa
\label{sec:prelim}
Before we present our solution, we review preliminaries related to \emph{locality sensitive hashing (LSH)}~\cite{indyk1998approximate} and \emph{local community detection}~\cite{Andersen2006,Avron2015}. Both concepts have been employed for static networks and are, thus, a natural starting point for na{\"i}ve baselines.

\noindent{\bf LSH and neighborhood similarity.} 
Indyk et al.~\cite{indyk1998approximate} proposed LSH for approximate nearest neighbor (NN) search. A family of functions is $(d_1,d_2,p_1,p_2)${\em -sensitive\em} w.r.t. a distance measure $d$ if for every function $f$ in the family, $d_2\leq d(x,y) \leq d_1\Rightarrow p_2\geq P[f(x) = f(y)] \geq p_1$. 
The family of \emph{minhash} functions for sets was shown to be locality-sensitive w.r.t. the Jaccard distance (defined as $1 - $Jaccard Similarity)~\cite{broder1998min,indyk1998approximate}. 
The concept of LSH for node neighborhoods was employed for fast community detection in static networks by Macropol et al.~\cite{macropol2010scalable}. The intuition is that nodes in dense communities tend to have similar neighborhoods and thus they will collide when hashed using an LS family. We extend this intuition to locality in time by considering similarity in both time and graph space. In addition, the community strength in time is dependent not only on the existence of edges but also the level of interaction, which we model as weights in time $w(u,v,t)$. In order to incorporate weights we adopt a recent approach by Ioffe et al.~\cite{ioffe2010improved} for LSH of weighted sets using a weighted Jaccard similarity, defined as follows for neighborhoods in our setting:\vsb
$$ J_W(N_i,N_j) = \frac{\sum_{v \in N_i \cup N_j} min(w(v,i),w(v,j))}{\sum_{v \in N_i \cup N_j} max(w(v,i),w(v,j)) }, \vsb$$
where $N_i$ and $N_j$ are the neighborhoods of nodes $i$ and $j$ in a given time period (time indices of the weight omitted for simplicity). 

\noindent{\bf Local communities in static graphs.} Recent approaches for low-conductance local community detection in static networks rely on graph diffusion~\cite{Andersen2006,Avron2015}. The goal is to obtain a single partition around a predefined seed node by a local computation that involves a small fraction of the graph around the seed and that closely approximates the best conductance involving the seed. In our proposed approach, we first ``spot'' seed nodes in time based on high temporal neighborhood similarity and expand to a community similar to the spectral sweep method by Andersen et al.~\cite{Andersen2006}.



\vsb
\subsection{Temporal Neighborhood LSH to spot seeds}
\label{sec:seed}
A low-conductance temporal community consists of nodes that mostly interact with each other over a contiguous time interval when the community is active. A first step in our approach is to find seed nodes within the community in the corresponding time frame that can then be used to expand to strong communities. Our solution for seed selection is based on the observation that weighted node neighborhoods within the community tend to be similar. We exploit this observation in order to obtain seeds for promising regions in time. Specifically, we propose a scalable similarity search method based on hashing of node neighborhoods in time. We show that our scheme is locality-sensitive in both time and graph space. Its parameters can be optimized to target a pre-specified duration in time---a property that we exploit in conjunction with temporal pruning of feasible intervals in order to reduce the computational and memory footprint of our approach.

Given a dynamic graph $G(V,E,W)$, the weighted temporal neighborhood $N_u^t$ of node $u$ is the weighted set of its neighbors (including $u$): $N_u^t=\{(v:w(u,v,t)) \vert (u,v)\in E\}\cup (u:vol(u,t))$. We adopt the weighted Jaccard similarity $J_W(N_u^t,N_v^t)$ and the weighted minhash function for $\psi(\cdot)$ ensuring that 
\vsb$$P[\psi(N_u^t)=\psi(N_v^t)]=J_W(N_u^t,N_v^t).\vsb$$ We hash neighborhoods using $r$ independent minhash functions to create a graph signature $S_G^r(N_u^t)$ for a given neighborhood $N_u^t$. 

If we compare weighted neighborhoods, disregarding the time at which they were observed, we may produce collisions of high-similarity neighborhoods that may be potentially distant in time. Hence, a straightforward adoption of LSH for weighted sets will not be locality sensitive with respect to time. Instead, we need to associate highly similar neighborhoods that are also close in time. 
The main intuition behind our temporal locality sensitive hashing function is: \emph{close time instants are likely to belong to the same interval if the timeline is partitioned into random segments}. 

Let $p^k=\{p_1 < p_2 \dots < p_k\}$ be a $k$-partitioning of the timeline using $k$ pivot time points selected uniformly at random in $[0,T]$. We define a hash function $\tau^k(\cdot)$ based on the partitioning $p$ that maps a given time point $t$ to the index of the earliest pivot $p_i \in p$ whose time exceeds $t$:
$\tau^k(t)=\{min(i)|p_i\geq t,p_i\in p^k\}.$ 
\vsb
\begin{thm}{\bf[Temporal locality]}\label{thm:tlocal}
The family of temporal hash functions $\tau^k$ is $(\Delta_1,\Delta_2,(1-\frac{\Delta_1}{T})^k,(1-\frac{\Delta_2}{T})^k)${\em -sensitive\em} family for the distance in time $\Delta$ defined as the delay between two timepoints.
\end{thm}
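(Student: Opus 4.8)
The plan is to compute the collision probability $P[\tau^k(t)=\tau^k(t')]$ \emph{exactly} as a function of the time-distance $\Delta=|t-t'|$, and then obtain the two-sided sensitivity bound by invoking the monotonicity of that expression. Throughout I would treat the $k$ pivots $p_1,\dots,p_k$ as independent and uniformly distributed on $[0,T]$, so that ties among pivots or between a pivot and $t,t'$ occur with probability zero and may be ignored.

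First I would establish the combinatorial heart of the argument: a characterization of \emph{when a collision occurs}. Assume without loss of generality $t<t'$, so $\Delta=t'-t$. Since $\tau^k$ labels a point by the index of the smallest pivot lying at or above it, I claim $\tau^k(t)=\tau^k(t')$ if and only if no pivot falls in the interval $[t,t')$. For the ``if'' direction, if no pivot lies in $[t,t')$ then every pivot that is $\ge t$ is in fact $\ge t'$, so the smallest pivot above $t$ and the smallest pivot above $t'$ coincide. For the ``only if'' direction, any pivot $p_j$ with $t\le p_j<t'$ forces the smallest pivot above $t$ to be strictly less than the smallest pivot above $t'$, yielding distinct labels. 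The boundary case in which a point has no pivot above it would be absorbed by positing an implicit sentinel pivot at $T$, which leaves the event ``no pivot in $[t,t')$'' unchanged for interior points.

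Given this characterization, the remaining steps are routine. The collision event is exactly ``none of the $k$ pivots lands in the length-$\Delta$ window between $t$ and $t'$''; a single uniform pivot avoids this window with probability $1-\Delta/T$, and by independence
\[
P[\tau^k(t)=\tau^k(t')]=\Bigl(1-\tfrac{\Delta}{T}\Bigr)^{k}.
\]
Because $g(\Delta)=(1-\Delta/T)^k$ is monotonically decreasing on $[0,T]$, any pair with $\Delta_2\le\Delta\le\Delta_1$ satisfies $g(\Delta_1)\le P\le g(\Delta_2)$, i.e.
\[
\Bigl(1-\tfrac{\Delta_1}{T}\Bigr)^{k}\le P[\tau^k(t)=\tau^k(t')]\le\Bigl(1-\tfrac{\Delta_2}{T}\Bigr)^{k}.
\]
Reading off $p_1=(1-\Delta_1/T)^k$ and $p_2=(1-\Delta_2/T)^k$ reproduces exactly the $(\Delta_1,\Delta_2,p_1,p_2)$-sensitivity required.

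The only place I expect to need genuine care is the combinatorial equivalence in the second paragraph: pinning down the correct half-open boundary of the separating interval and disposing of the edge case where a timepoint has no pivot above it. Everything else reduces to a product of independent Bernoulli events and the monotonicity of $g$. Since the pivots are continuous, tie events have measure zero, so whether the separating window is taken as $[t,t')$, $(t,t')$, or $(t,t']$ is immaterial to the probability, and the argument closes cleanly.
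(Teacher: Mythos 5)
Your proposal is correct and follows essentially the same route as the paper's own proof: characterize a collision as the event that no pivot separates $t$ from $t'$, compute that probability exactly as $(1-\Delta/T)^k$ by independence of the uniform pivots, and read off the two-sided sensitivity bound from monotonicity. You are somewhat more careful than the paper about the half-open boundary of the separating window and the no-pivot-above edge case, but as you note these are measure-zero distinctions that do not change the argument.
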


\begin{proof}\vsa
Available in the Appendix. \vsb
\end{proof}

Beyond being locality sensitive in time, our pivot-based hashing family $\tau^k$ can be configured to target specific community lengths. While we do not know the duration of good communities in the data a priori, as we will show in the following section, we can prune intervals in time that cannot include the best communities with guarantees. Hence, we need to be able to focus on matching neighborhoods at time resolutions that are viable in order to reduce the memory and running time footprint of our solution. To enable this, we need to answer the following question: \emph{What is the optimal number of pivots $k$ to detect communities of a given duration?} Assuming that a target community duration is $\Delta^*$, we need to choose $k$ such that similar temporal neighborhoods within that period have a high chance of collision. A perfect partitioning $p$ of the timeline would produce a single segment that isolates the target period of length $\Delta^*$. This requires two pivots to ``bracket'' the period and all other pivots to fall outside of it.
\vsb
\begin{thm}{\bf[Optimal number of pivots]}
The number of pivots $k^*$ that maximizes the probability of a perfect partition of a period of length $\Delta^*$ is $k^*\approx\lfloor\frac{2T}{\Delta^*}\rfloor$.
\label{thm:optk}
\end{thm}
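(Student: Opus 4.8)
The plan is to express the probability of a \emph{perfect partition} as an explicit function of $k$ and then maximize it, treating $k$ as a continuous variable. Fix a target period $P=[a,a+\Delta^*]$ of length $\Delta^*$. Following the description preceding the statement, a perfect partition is the event that two of the $k$ uniformly random pivots land so as to \emph{bracket} $P$---one at its left boundary and one at its right boundary---while the remaining $k-2$ pivots all fall outside $P$. A single pivot lands outside $P$ with probability $1-\Delta^*/T$, and the bracketing contributes a factor proportional to the number of ordered choices of the bracketing pair, so the (density of the) perfect-partition probability is proportional to
$$f(k)\;\propto\;k(k-1)\left(1-\frac{\Delta^*}{T}\right)^{k-2}.$$
The two boundary events are the delicate point: placing a pivot \emph{exactly} at a boundary has measure zero for continuous pivots, so I would phrase the argument either as maximizing a probability density (a pivot in $[a,a+da]$ and one in $[a+\Delta^*,a+\Delta^*+da]$) or, equivalently, over a finely discretized timeline. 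In either formulation the $k$-dependence is captured entirely by the displayed expression, and the constant or $da$ factors do not affect the location of the maximum. I would also note that the period's position $a$ cancels out, since only the length $\Delta^*$ enters, through the ``outside'' probability $1-\Delta^*/T$.

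To locate the optimum I would maximize $\ln f(k)=\ln\!\big(k(k-1)\big)+(k-2)\ln\!\left(1-\frac{\Delta^*}{T}\right)$ over $k$. Differentiating gives
$$\frac{d}{dk}\ln f(k)=\frac{2k-1}{k(k-1)}+\ln\!\left(1-\frac{\Delta^*}{T}\right),$$
and setting this to zero yields the stationarity condition. Using the two approximations valid in the intended regime---$\frac{2k-1}{k(k-1)}\approx\frac{2}{k}$ for large $k$, and $\ln(1-\Delta^*/T)\approx-\Delta^*/T$ when $\Delta^*\ll T$---the condition collapses to $\frac{2}{k}=\frac{\Delta^*}{T}$, i.e. $k=\frac{2T}{\Delta^*}$. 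Rounding to the nearest feasible integer pivot count then gives $k^*\approx\lfloor\frac{2T}{\Delta^*}\rfloor$. Since $\ln\!\big(k(k-1)\big)$ is concave (its second derivative $-1/k^2-1/(k-1)^2$ is negative) and the remaining term is linear, $\ln f$ is concave and this stationary point is the unique maximizer.

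I expect the main obstacle to be making the notion of a ``perfect partition'' rigorous, rather than the optimization itself. Once $f(k)$ is in hand the maximization is a one-line calculus exercise, so the work lies in justifying the form of $f(k)$: handling the measure-zero bracketing events through a density or discretization argument and confirming independence from $a$. The two approximations also explain the ``$\approx$'' in the statement---they are accurate precisely when $\Delta^*\ll T$ and $k$ is large---so the result should be read as an asymptotic optimum in that regime.
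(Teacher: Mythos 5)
Your proposal is correct and follows essentially the same route as the paper: the same perfect-partition probability $k(k-1)(1-\Delta^*/T)^{k-2}$ (up to the $1/T^2$ bracketing factor, which the paper assigns via a discrete $1/T$ per boundary rather than your density argument), followed by setting the derivative in $k$ to zero. The only cosmetic difference is that you differentiate $\ln f$ and approximate the stationarity condition directly, whereas the paper differentiates $f$, solves the resulting quadratic in closed form, and only then approximates the root by $2T/\Delta^*$; your added concavity remark is a small bonus the paper omits.
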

\begin{proof}\vsa
Available in the Appendix. \vsb
\end{proof}


To detect similar neighborhoods in time we combine an $r$-sized graph signature $S^r_G(N_u^t)$ with a temporal signature $S^k_T(N_u^t)=\tau^k(t)$ using an \emph{AND} predicate to obtain a unified {\em temporal neighborhood signature \em} $S^{r,k}(N_u^t)$ that is a locality sensitive family in both the time and graph domains as a direct consequence of the locality $S^r_G(N_u^t)$ and $S^k_T(N_u^t)$.
\vsa
\begin{cor}
Let $S^{r,k}$ be a temporal neighborhood hash function with $r$ minhashes and $k$ partitions. Then: $P[S^{r,k}(N_u^t)=S^{r,k}(N_u^{t'})]=J_W(N_{u}^{t},N_{v}^{t'})^r (1 - \frac{|t - t'|}{T})^k.$\vsa
\end{cor}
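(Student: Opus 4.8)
The plan is to derive this collision probability as a product of two independent contributions — one from the graph signature $S^r_G$ and one from the temporal signature $S^k_T$ — exploiting that the unified signature $S^{r,k}$ is formed by combining the two under an \emph{AND} predicate. Recall that two neighborhoods collide under $S^{r,k}$ precisely when their graph signatures $S^r_G$ agree \emph{and} their temporal signatures $S^k_T$ agree. The heart of the argument is therefore to (i) compute each of these two collision probabilities separately, and (ii) argue that the two events are independent, so that the probability of the conjunction factorizes into the product of the two.

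First I would handle the graph component. The signature $S^r_G(N_u^t)$ is built from $r$ independent weighted-minhash functions $\psi_1,\dots,\psi_r$, each satisfying the defining identity $P[\psi(N_u^t)=\psi(N_v^{t'})]=J_W(N_u^t,N_v^{t'})$ established earlier via the weighted-minhash construction of Ioffe et al. Since $S^r_G$ agrees if and only if all $r$ coordinates agree, and the $\psi_i$ are drawn with independent random seeds, the collision probability is the product $\prod_{i=1}^{r} J_W(N_u^t,N_v^{t'}) = J_W(N_u^t,N_v^{t'})^r$. This is simply the standard AND-amplification of an LSH family, where independent coordinates multiply the per-coordinate collision probability.

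Next I would invoke the temporal-locality result of Theorem~\ref{thm:tlocal} for the time component. Two time points collide under $\tau^k$ exactly when none of the $k$ independently placed pivots separates them, which is precisely the event whose probability the proof of Theorem~\ref{thm:tlocal} computes as $(1-\frac{\Delta}{T})^k$ for distance $\Delta$. Setting $\Delta=|t-t'|$ gives directly $P[S^k_T(N_u^t)=S^k_T(N_v^{t'})] = (1-\frac{|t-t'|}{T})^k$.

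Finally I would combine the two. Because the minhash seeds defining $S^r_G$ are drawn independently of the pivot placement defining $S^k_T$, the graph-collision event and the temporal-collision event are independent, so the probability of their conjunction is the product of the two factors above, yielding $J_W(N_u^t,N_v^{t'})^r\,(1-\frac{|t-t'|}{T})^k$. The main (and essentially only) subtlety is justifying this independence cleanly: one must confirm that the two sub-signatures share no common random draws, so that no conditioning couples the events. Everything else is a routine application of AND-amplification together with the already-established temporal-locality theorem.
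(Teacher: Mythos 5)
Your proof is correct and follows exactly the route the paper intends: the paper states the corollary as ``a direct consequence'' of the locality of $S^r_G$ and $S^k_T$ under the \emph{AND} composition, and your argument---AND-amplification of the $r$ independent weighted minhashes giving $J_W^r$, Theorem~\ref{thm:tlocal}'s pivot computation giving $(1-\frac{|t-t'|}{T})^k$, and independence of the two sources of randomness letting the conjunction factorize---is precisely the omitted justification. No gaps.
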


Since our composite hashing family is locality sensitive in both time and the graph, we can construct signatures that amplify its locality sensitivity. For example, if we combine $l$ independent hash signatures $S^{r,k}$ using an \emph{OR} predicate, i.e. require that there is a match in at least one hash value for a collision, the resulting collision probability will be $1-[1-p_{r,k}]^l$, where $p_{r,k}=P[S^{r,k}(N_u^t)=S^{r,k}(N_u^{t'})]$. Similarly, an \emph{AND} predicate composition will result in $p_{r,k}^l$ probability of collision. Using cascades of such composition we can ``shape'' the selectivity of our LSH scheme and thus control rate of FP and FN collisions at the cost of increased memory and computational overhead. 

A neighborhood signature requires $\log (k|V|^r)$ bits of storage, since a single weighted minhash value is a vertex index and a temporal hash value is the position of the first pivot index exceeding the timestamp of the hashed neighborhood. For a fixed temporal resolution and a composition of $b$ independent signatures (OR/AND predicate compositions), the overall memory footprint of our hashing approach will be $bT|V|\log(k|V|^r)$ bits. The above analysis is pessimistic as it assumes no collisions and thus storing the signatures for all existing collision bins. Nevertheless, large and long-evolving instances may be impractical to exhaustively hash and process. Our filtering approach discussed next addresses this challenge and allows us to significantly reduce the memory and computational footprint.

\vsb
\subsection{Spectral bounds for pruning time intervals}
\label{sec:prune}
The strongest temporal communities exhibit lower conductance than other communities in the network. Also, in real-world graphs many time intervals contain no promising communities---i.e. no project deadline or spike in network traffic. Our goal is to eliminate from consideration such periods of low community activity which cannot coincide with the best temporal community. In what follows, we develop lower bounds on the temporal conductance of any subgraph in a given time window. We employ our bounds in combination with a solution estimate to deterministically prune irrelevant intervals in time. Such pruning can significantly improve the running time of our LSH-based approach as we can target only promising neighborhoods in time by adjusting the time scale (i.e. number of pivots $k$) for temporal hashing. 

Let $G^{[t,t']}$ be the aggregate graph of $G(V,E,W)$ over time interval $[t,t']$ with aggregate edge weights $w(u,v,t,t')$. The temporal \emph{graph conductance} in an aggregate weighted graph is defined as the minimum temporal conductance over all subsets $C\in V$:
$
\phi(G^{[t,t']}) = \min_{C\in V}\phi(S,t,t').
$
Let $A$ be the adjacency matrix of a weighted graph $G^{[t,t']}$ with elements $A_{u,v}=w(u,v,t,t')$ and $D$ be the diagonal ``degree'' matrix with elements $D_{u,u}=vol(u,t,t')$ and $0$ in all off-diagonal elements. The matrix $\mathcal{L}=D-A$ is the unnormalized graph Laplacian, while the matrix $\mathcal{N}=D^{-1/2}\mathcal{L}D^{-1/2}$ is the symmetric normalized graph Laplacian ~\cite{Spielman2010}. 
The Laplacian matrices have many advantageous properties and have been employed in spectral graph partitioning~\cite{Spielman2010,chung1997spectral}. The eigenvalues $0=\lambda_1\leq\lambda_2\leq...\leq\lambda_{|V|}\leq2$ of $\mathcal{N}$ are all real, non-negative and contained in $[0,2]$. The smallest eigenvalue is $0$ and its multiplicity is the same as the number of connected components. Assuming that the graph is connected (i.e. one connected component), one can show the following relationship with the graph conductance:
\vsb
\begin{lemma}{\bf [Spectral bound~\cite{Spielman2010}]} The temporal graph conductance of a weighted graph can be bounded as follows:
$
\uline{\phi}(G^{[t,t']}) = \eta(t,t')\lambda_2/2 \leq \phi(G^{[t,t']}).
$
\end{lemma}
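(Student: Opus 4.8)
The plan is to recognize this as the ``easy'' direction of Cheeger's inequality specialized to the aggregate weighted graph $G^{[t,t']}$. Since $\eta(t,t')$ depends only on the interval and not on the candidate set $C$, it factors out of the minimization, so that $\phi(G^{[t,t']}) = \eta(t,t')\min_{C} cut(C,t,t')/\min(vol(C,t,t'),vol(\bar{C},t,t'))$. It therefore suffices to prove the unnormalized bound $\lambda_2/2 \le cut(C)/\min(vol(C),vol(\bar{C}))$ for the conductance-minimizing $C$ and then multiply through by $\eta(t,t')$; I suppress the interval indices below.

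First I would invoke the Courant--Fischer variational characterization of the second smallest eigenvalue of the symmetric normalized Laplacian $\mathcal{N} = D^{-1/2}\mathcal{L}D^{-1/2}$. Writing a test vector in the form $z = D^{1/2}x$ converts the Rayleigh quotient of $\mathcal{N}$ into the familiar edge form
\[
\lambda_2 = \min_{\sum_u vol(u)\,x_u = 0}\ \frac{\sum_{(u,v)\in E} w(u,v)\,(x_u - x_v)^2}{\sum_{u} vol(u)\, x_u^2},
\]
where the constraint expresses orthogonality (in the $D$ inner product) to the trivial eigenvector $\mathbf{1}$. Connectivity guarantees that this eigenvector is unique up to scaling and that $\lambda_2 > 0$, so the characterization is well posed.

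Next, for an arbitrary set $C$ I would substitute the test vector $x_u = 1/vol(C)$ for $u\in C$ and $x_u = -1/vol(\bar{C})$ for $u\in\bar{C}$. A one-line check shows $\sum_u vol(u)\,x_u = 0$, so $x$ is admissible. Only cut edges contribute to the numerator, each with $(x_u-x_v)^2 = (1/vol(C)+1/vol(\bar{C}))^2$, while the denominator collapses to $1/vol(C)+1/vol(\bar{C})$; hence the Rayleigh quotient equals $cut(C)\,(1/vol(C)+1/vol(\bar{C}))$. Taking $C$ to be the smaller side so that $1/vol(\bar{C})\le 1/vol(C)$ bounds this by $2\,cut(C)/\min(vol(C),vol(\bar{C}))$, giving $\lambda_2 \le 2\,cut(C)/\min(vol(C),vol(\bar{C}))$. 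Because this holds for every $C$, it holds in particular for the minimizer, and reinstating $\eta(t,t')$ yields the claim.

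The argument is essentially routine, so the only delicate points are bookkeeping rather than conceptual. I would be most careful about (i) the $D^{1/2}$ change of variables, ensuring the orthogonality constraint is stated in the $D$ inner product so that the trivial eigenvector is $\mathbf{1}$ rather than $D^{1/2}\mathbf{1}$; and (ii) the direction of the final inequality, since the bound is derived as an \emph{upper} bound on $\lambda_2$ that is valid for \emph{all} $C$ --- which is precisely what licenses passing to the minimum over $C$, and thus to $\phi(G^{[t,t']})$, rather than only to a single fixed cut.
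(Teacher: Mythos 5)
Your proof is correct: the change of variables $z = D^{1/2}x$, the orthogonality constraint $\sum_u vol(u)x_u = 0$, the two-level test vector $x_u \in \{1/vol(C), -1/vol(\bar C)\}$, and the final bound $1/vol(C)+1/vol(\bar C) \le 2/\min(vol(C),vol(\bar C))$ are all handled properly, and factoring out $\eta(t,t')$ is legitimate since it does not depend on $C$. The paper itself gives no proof of this lemma --- it simply cites \cite{Spielman2010} --- and your argument is exactly the standard ``easy direction of Cheeger'' derivation that the citation points to, so there is nothing to reconcile.
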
\vsa
Note that the above bound is valid for arbitrary weighted graphs, although we explicitly state it in the context of aggregated graphs including the normalization based on $\eta(t,t')$. The conductance of any approximate solution $\bar{\phi}$ can serve as an upper bound to that of the lowest conductance in $G(V,E,W)$ and can be employed to prune irrelevant intervals: 
\vsb
\begin{cor} {\bf[Pruning]}
If $\bar{\phi} \leq \eta(t,t')\lambda_2(\mathcal{N}^{t,t'})/2$, then $[t,t']$ does not contain the lowest conductance temporal community.\vsb
\end{cor}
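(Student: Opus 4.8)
The plan is to obtain the conclusion as a short transitivity argument that layers the preceding \textbf{Spectral bound} Lemma on top of the defining property of $\bar{\phi}$; the real content is simply verifying that the relevant quantities align into a single chain of inequalities. First I would fix the interval $[t,t']$ and recall that, by definition of the temporal graph conductance, $\phi(G^{[t,t']}) = \min_{C\subseteq V}\phi(C,t,t')$, so that \emph{every} temporal community whose active interval is exactly $[t,t']$ has conductance at least $\phi(G^{[t,t']})$.

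Next I would invoke the Lemma, which gives $\uline{\phi}(G^{[t,t']}) = \eta(t,t')\lambda_2(\mathcal{N}^{t,t'})/2 \leq \phi(G^{[t,t']})$. Concatenating with the previous step, any community $(C,t,t')$ supported on $[t,t']$ satisfies $\phi(C,t,t') \geq \phi(G^{[t,t']}) \geq \eta(t,t')\lambda_2(\mathcal{N}^{t,t'})/2$. Under the hypothesis of the corollary, $\bar{\phi} \leq \eta(t,t')\lambda_2(\mathcal{N}^{t,t'})/2$, so the chain closes to $\phi(C,t,t') \geq \bar{\phi}$ for all $C\subseteq V$.

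Finally I would bring in the role of $\bar{\phi}$ as the conductance of a feasible (heuristic) solution, which makes it an upper bound on the global optimum: writing $(C_o,t_o,t_o')$ for the lowest-conductance community of Problem~1, we have $\phi(C_o,t_o,t_o') \leq \bar{\phi}$. If the optimal interval were $[t,t']$, the previous paragraph would force $\phi(C_o,t,t') \geq \bar{\phi}$, hence $\phi(C_o,t_o,t_o') = \bar{\phi}$; that is, the optimum would already be attained by the known solution. Consequently $[t,t']$ hosts no community strictly improving on $\bar{\phi}$, and discarding it loses no optimum, which is exactly the pruning guarantee.

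I expect the only delicate point — more bookkeeping than genuine obstacle — to be the non-strict inequality in the hypothesis: since the test uses $\leq$, equality leaves open the degenerate case where $[t,t']$ merely \emph{ties} the incumbent $\bar{\phi}$. I would resolve this by stating the conclusion as ``no retained interval can \emph{strictly} beat the incumbent,'' which is precisely what safe pruning requires; adopting the strict test $\bar{\phi} < \eta(t,t')\lambda_2(\mathcal{N}^{t,t'})/2$ (or retaining a single tie-representative) removes the ambiguity entirely.
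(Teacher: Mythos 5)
Your argument is correct and is exactly the paper's (unstated, one-line) reasoning: the Spectral bound Lemma lower-bounds the conductance of every community active on $[t,t']$ by $\eta(t,t')\lambda_2(\mathcal{N}^{t,t'})/2$, so the hypothesis forces all such communities to be no better than the incumbent $\bar{\phi}$, making the interval safe to discard. Your remark about the non-strict inequality and ties is a fair observation about how the corollary's conclusion should be read, but it is a cosmetic refinement rather than a gap.
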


The corollary follows directly from the spectral bound. An intuitive approach for pruning is to compute $\eta(t,t')\lambda_2(\mathcal{N}^{t,t'})/2$ for the aggregated graphs of all possible intervals in time, incurring a quadratic number of eigenvalue computations which will not scale to large graphs evolving over long periods of time. In what follows, we show that one can obtain a lower bound for $\lambda_2$ of an interval based on the eigenvalues in sub-intervals reducing the number of necessary eigenvalue computations in our pruning strategy. 

\vsb
\begin{lemma}
\label{lem:dom}
Let $A$ be a real positive semi-definite matrix of dimension $n$, and let $d$ and $\epsilon$ be two real vectors of the same dimension s.t. $d_i\geq0, \epsilon_i\geq 0\ \forall i\leq n$. Then, 
$\min_{f\perp d+\epsilon}f^TAf \geq \min_{g\perp d} g^TAg.$ \vsb
\end{lemma}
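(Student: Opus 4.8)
\emph{Proof approach.} The two quantities are constrained Rayleigh minima over the hyperplanes $d^{\perp}$ and $(d+\epsilon)^{\perp}$, so by the Courant--Fischer characterization the statement asserts that translating the constraint vector from $d$ to $d+\epsilon$ by a nonnegative amount cannot lower the constrained minimum of $f^{T}Af$. The plan is to prove this by a \emph{domination} (feasible-point) argument: let $f^{\star}$ attain $\min_{f\perp d+\epsilon}f^{T}Af$ and exhibit a vector $g$ that is feasible for the right-hand problem (i.e.\ $g\perp d$) and satisfies $g^{T}Ag\le (f^{\star})^{T}Af^{\star}$; this immediately yields $\min_{g\perp d}g^{T}Ag\le (f^{\star})^{T}Af^{\star}=\min_{f\perp d+\epsilon}f^{T}Af$, which is the claim.

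\emph{Construction.} First I would exploit the kernel of $A$. Since $A$ is the Laplacian-type matrix arising in our setting, it carries a nonnegative null vector $\mathbf{u}\in\ker A$ with $\mathbf{u}\ge 0$ (the all-ones, respectively square-root-degree, vector). I set $g=f^{\star}+c\,\mathbf{u}$ and choose the single scalar $c$ so that the one linear condition $g^{T}d=0$ holds, namely $c=-(f^{\star})^{T}d/(\mathbf{u}^{T}d)$. This $c$ is well defined precisely because the hypotheses $d\ge 0$, $d\neq 0$ together with $\mathbf{u}\ge 0$, $\mathbf{u}\neq 0$ force the denominator $\mathbf{u}^{T}d>0$; this is the first place the nonnegativity assumptions enter. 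Because $A\mathbf{u}=0$, the objective is translation-invariant along $\mathbf{u}$, so $g^{T}Ag=(f^{\star})^{T}Af^{\star}$, and $g$ is feasible for the right-hand side.

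\emph{Main obstacle.} The delicate step is normalization. The minima are taken over unit vectors (equivalently, they are the ratios $f^{T}Af/f^{T}f$), so translating by $c\,\mathbf{u}$ leaves the numerator fixed but changes the norm, and I must verify $\|g\|\ge\|f^{\star}\|$ in order to conclude that the \emph{ratio} does not increase. Substituting the orthogonality $(f^{\star})^{T}(d+\epsilon)=0$, i.e.\ $(f^{\star})^{T}d=-(f^{\star})^{T}\epsilon$, rewrites the shift as $c=(f^{\star})^{T}\epsilon/(\mathbf{u}^{T}d)$, so that controlling $\|g\|^{2}-\|f^{\star}\|^{2}=2c\,(f^{\star})^{T}\mathbf{u}+c^{2}\|\mathbf{u}\|^{2}$ reduces to a sign-and-magnitude bound involving $(f^{\star})^{T}\epsilon$ and $(f^{\star})^{T}\mathbf{u}$. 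This is exactly where the componentwise nonnegativity of $\epsilon$ must be leveraged, and I expect it to be the crux of the argument; the remaining algebra is routine. Finally I would transfer the result to the normalized-Laplacian $\lambda_{2}$ used in the interval-merging bound through the change of variables $f\mapsto D^{1/2}f$, under which $d$-orthogonality becomes the $D$-weighted orthogonality to $\mathbf{1}$ while $\epsilon$ remains nonnegative.
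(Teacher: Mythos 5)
Your route is genuinely different from the paper's: the paper forms the Lagrangian $L(f,\lambda)=f^TAf+\lambda f^T(d+\epsilon)$, solves for the stationary point $f^*=-\tfrac{\lambda}{2}A^{-1}(d+\epsilon)$, substitutes back, and gets the inequality by expanding $(d+\epsilon)^TA^{-1}(d+\epsilon)$ and discarding the cross and $\epsilon$ terms; it never introduces a kernel vector or a norm comparison. Your feasible-point argument --- shift the left-hand minimizer along a nonnegative null vector $\mathbf{u}$ of $A$ until it lands in $d^{\perp}$ --- is a reasonable instinct, but the step you yourself flag as the crux is exactly where it breaks, and it cannot be repaired. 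With $g=f^{\star}+c\,\mathbf{u}$ and $c=(f^{\star})^{T}\epsilon/(\mathbf{u}^{T}d)$ you get $\|g\|^{2}-\|f^{\star}\|^{2}=2c\,(f^{\star})^{T}\mathbf{u}+c^{2}\|\mathbf{u}\|^{2}$, and the cross term has no usable sign: replacing $f^{\star}$ by $-f^{\star}$ flips both $c$ and $(f^{\star})^{T}\mathbf{u}$, so their product is invariant and cannot be normalized to be nonnegative; when $|c|$ is small it dominates the $c^{2}$ term, $\|g\|<\|f^{\star}\|$, and the Rayleigh quotient of $g$ exceeds that of $f^{\star}$, so the domination argument collapses. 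Componentwise nonnegativity of $\epsilon$ gives you no control over $(f^{\star})^{T}\mathbf{u}$.

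The deeper problem is that under the unit-norm (or any fixed-norm) reading that your argument requires, the statement is not true for a general PSD $A$: by the Courant--Fischer max--min principle, $\min_{\|f\|=1,\,f\perp v}f^{T}Af\le\lambda_{2}(A)$ for every $v$, with equality exactly when $v$ is a bottom eigenvector, so when $d$ is (proportional to) that eigenvector --- the situation in the intended application, where $d$ is the degree vector --- perturbing the constraint to $d+\epsilon$ can only decrease the constrained minimum, the opposite of what is claimed. Concretely, $A=\mathrm{diag}(1,2)$, $d=(1,0)^{T}$, $\epsilon=(0,1)^{T}$ gives left-hand side $3/2$ and right-hand side $2$. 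This example also shows you silently import a hypothesis the lemma does not grant: a general PSD $A$ can be invertible, in which case there is no nonzero $\mathbf{u}\in\ker A$ at all, let alone a nonnegative one, and your construction is unavailable outside the Laplacian special case. In short, the outline correctly isolates the hard step, but that step is a genuine obstruction rather than routine algebra, and no choice of $g$ can close it in the stated generality.
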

\begin{proof}
Available in the Appendix.
\end{proof}\vsb

\begin{thm}
\label{thm:sumbound} {\bf[Composite bound]}
Let $[t,t']$ be partitioned in $k$ consecutive non-overlapping subintervals $\{[t_1,t_1'], [t_2,t_2']...[t_k,t_k']\}$ such that $t_i=t_{i+1}'-1, \forall i\in[1,k)$ with corresponding aggregated normalized graph Laplacians $\mathcal{N}_i$. Then,\vsa
$$\lambda_2(\hat{\mathcal{N}}) \geq \sum_{i=1}^{k} \min_{u\in V}\frac{vol(u,t_i,t_i')}{vol(u,t,t')}\lambda_2(\mathcal{N}_i),\vsb$$
\noindent where $\lambda_2(\mathcal{N}_i)$ is the second smallest eigenvalue of $\mathcal{N}_i$, and $\hat{\mathcal{N}}$ is the Laplacian of $G^{[t,t']}$.\vsb 
\end{thm}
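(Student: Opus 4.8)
The plan is to pass through the \emph{unnormalized} Laplacians, which add cleanly over time, and then re-normalize every subinterval contribution by the \emph{same} aggregate degree matrix so that they become a single sum of operators sharing one common bottom eigenvector. Write $\hat{D}$ for the degree matrix of $G^{[t,t']}$ (so $\hat D_{uu}=vol(u,t,t')$), $D_i$ for the degree matrix of the $i$-th subgraph ($(D_i)_{uu}=vol(u,t_i,t_i')$), and $\mathcal{L}_i=D_i-A_i$, $\hat{\mathcal L}=\hat D-\hat A$ for the corresponding unnormalized Laplacians. Because edge weights aggregate additively in time, $\hat A=\sum_i A_i$ and $\hat D=\sum_i D_i$, hence $\hat{\mathcal L}=\sum_{i=1}^k \mathcal L_i$. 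The first step is to set $B_i=\hat D^{-1/2}\mathcal L_i\hat D^{-1/2}$, so that $\hat{\mathcal N}=\hat D^{-1/2}\hat{\mathcal L}\hat D^{-1/2}=\sum_{i=1}^k B_i$, and observe that $B_i(\hat D^{1/2}\mathbf 1)=\hat D^{-1/2}\mathcal L_i\mathbf 1=\hat D^{-1/2}(D_i-A_i)\mathbf 1=0$. Thus every $B_i$ (and $\hat{\mathcal N}$) is positive semidefinite with the common bottom eigenvector $\hat D^{1/2}\mathbf 1$ at eigenvalue $0$.

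Next I would apply Courant--Fischer against this common eigenvector. Since $\hat D^{1/2}\mathbf 1$ is a bottom eigenvector of each $B_i$, we have $\min_{x\perp\hat D^{1/2}\mathbf 1,\ \|x\|=1}x^TB_ix=\lambda_2(B_i)$, and likewise $\lambda_2(\hat{\mathcal N})=\min_{x\perp\hat D^{1/2}\mathbf 1,\ \|x\|=1}\sum_i x^TB_ix$. Applying ``min of a sum $\geq$ sum of mins'' over this single constraint set gives $\lambda_2(\hat{\mathcal N})\geq\sum_{i=1}^k\lambda_2(B_i)$. It then remains only to prove the per-subinterval bound $\lambda_2(B_i)\geq c_i\,\lambda_2(\mathcal N_i)$, where $c_i=\min_{u}\frac{vol(u,t_i,t_i')}{vol(u,t,t')}=\min_u (D_i)_{uu}/\hat D_{uu}$, after which the theorem follows by summation.

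For the per-subinterval bound I would return to generalized Rayleigh quotients via $x=\hat D^{1/2}y$, giving $\lambda_2(B_i)=\min_{y\perp\hat D\mathbf 1}\frac{y^T\mathcal L_i y}{y^T\hat D y}$ and $\lambda_2(\mathcal N_i)=\min_{y\perp D_i\mathbf 1}\frac{y^T\mathcal L_i y}{y^T D_i y}$. Two ingredients combine: (i) since $(D_i)_{uu}\geq c_i\hat D_{uu}$ entrywise we get $y^TD_iy\geq c_i\,y^T\hat Dy$, so using $\mathcal L_i\succeq 0$ we obtain pointwise $\frac{y^T\mathcal L_iy}{y^T\hat Dy}\geq c_i\frac{y^T\mathcal L_iy}{y^TD_iy}$; and (ii) the orthogonality constraint must be moved from $y\perp\hat D\mathbf 1$ to $y\perp D_i\mathbf 1$. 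Ingredient (ii) is exactly what Lemma~\ref{lem:dom} delivers: substituting $w=D_i^{1/2}y$ turns $\frac{y^T\mathcal L_iy}{y^TD_iy}$ into $\frac{w^T\mathcal N_iw}{\|w\|^2}$ and the two constraints into $w\perp D_i^{-1/2}\hat D\mathbf 1$ and $w\perp D_i^{1/2}\mathbf 1$; since $D_i^{-1/2}\hat D\mathbf 1=D_i^{1/2}\mathbf 1+D_i^{-1/2}\sum_{j\neq i}D_j\mathbf 1$ is $D_i^{1/2}\mathbf 1$ plus a coordinatewise-nonnegative vector, Lemma~\ref{lem:dom} (with $A=\mathcal N_i$, $d=D_i^{1/2}\mathbf 1$) yields $\min_{w\perp D_i^{-1/2}\hat D\mathbf 1}\frac{w^T\mathcal N_iw}{\|w\|^2}\geq\lambda_2(\mathcal N_i)$. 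Chaining (i) and (ii) gives $\lambda_2(B_i)\geq c_i\lambda_2(\mathcal N_i)$, completing the argument.

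The main obstacle is precisely ingredient (ii). The vector realizing $\lambda_2(\hat{\mathcal N})$ is orthogonal only to the \emph{aggregate} degree vector $\hat D\mathbf 1$ and in general has a nonzero component along each subinterval degree direction $D_i\mathbf 1$, so the naive per-term estimate $\frac{y^T\mathcal L_iy}{y^TD_iy}\geq\lambda_2(\mathcal N_i)$ simply fails for that vector; this is the whole reason a separate structural lemma is needed. Lemma~\ref{lem:dom} resolves it by showing that tightening the orthogonality constraint from $D_i\mathbf 1$ to the coordinatewise-larger $\hat D\mathbf 1$ can only \emph{increase} the constrained minimum of a positive semidefinite quadratic form. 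One minor technicality to dispatch is a subinterval in which some node has zero volume, making $D_i$ singular and $D_i^{-1/2}$ ill-defined; this is handled by restricting to the support of $D_i$ (or using pseudoinverses), and it does not affect the bound since such coordinates contribute nothing to $\mathcal L_i$.
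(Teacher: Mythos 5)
Your proof is correct and follows essentially the same route as the paper's: the additive decomposition $\hat{\mathcal L}=\sum_i\mathcal L_i$, the volume-ratio bound $\min_u (D_i)_{uu}/\hat D_{uu}$, and Lemma~\ref{lem:dom} invoked for exactly the same purpose (relaxing the orthogonality constraint from $\hat D\mathbf 1$ to $D_i\mathbf 1$, noting $\hat d = d_i + \epsilon$ with $\epsilon\geq 0$). The only differences are cosmetic --- you apply ``min of a sum $\geq$ sum of mins'' first and then bound each $B_i=\hat D^{-1/2}\mathcal L_i\hat D^{-1/2}$ separately, whereas the paper carries the full sum through one chain of inequalities --- plus you explicitly flag the singular-$D_i$ edge case, which the paper glosses over.
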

\begin{proof}
Available in the Appendix.
\end{proof}\vsb\vsa

The composite bound for $\lambda_2(\hat{\mathcal{N}})$ enables pruning intervals without explicitly computing their interval eigenvalues. 
Given any partitioning $\{[t_1,t_1']...[t_k,t_k']\}$ of $[t,t']$, we can prune using the \emph{composite bound} $\uline{\phi_c}(G^{[t,t']}) = \eta(t,t')\sum_{i=1}^{k} \min_{u\in V}\frac{vol(u,t_i,t_i')}{vol(u,t,t')}\lambda_2(\mathcal{N}_i).$ 

To enable scalable pruning, we can pre-compute $\lambda_2$ for a subset of intervals and attempt to prune all intervals using $\uline{\phi_c}$ instead of an exhaustive eigenvalue computation. There is a trade-off between how many eigenvalues to pre-compute and the pruning power of $\uline{\phi_c}$. If we only compute single-time snapshot intervals, we can obtain all composite bounds, however, they may not be very tight for longer intervals. If we pre-compute too many intervals, we will incur cost similar to the exhaustive all-eigenvalue computation. 

We adopt a multi-scale scheme in which we pre-compute non-overlapping intervals of exponentially increasing lengths: \vsa $$\vsa l^i,l\in\mathbb{N}_{\geq2},\forall i\in\mathbb{N}_{[0,\ceil{log(T)}]}.$$ For example, if $l=2$, we compute $\lambda_2$ for non-overlapping intervals of sizes powers of $2$, i.e. $\{[0,0],\ldots[T,T],[0-1],[2-3],\ldots[T-1,T],\ldots\}$. The pre-computation requirement for our composite bound scheme is $O(Tlog(T)B),$ where $B$ is the time to compute $\lambda_2$ for a single aggregated graph using the Lanczos method. To compute $\uline{\phi_c}$ for any interval we incur cost $O(|E|log(T))$ as any interval can be composed by at most $O(log(T))$ subintervals with known eigenvalues. 

While our composite scheme requires sub-quadratic (in $T$) eigenvalue computations, we still need to compute $\uline{\phi_c}$ for $O(T^2)$ intervals to prune them. To further speed up the process, we group intervals of significant overlap and attempt to prune using a group-level bound without composing individual intervals within the group. To this end, we define a {\em pruning group\em} $\tau=(t,t',t'')$ as a set of intervals  with a common start $t$ and ending at times $t'$ to $t'', t'<t''$. We ensure a significant overlap between all group members by enforcing that the common interval prefix exceeds a fixed fraction of the length of all group members: $\frac{t'-t}{t''-t}\geq\beta$. Given a partitioning $\{[t_i,t'_i]\}$ of the group prefix $[t,t']$, we define the group lower bound as $\uline{\phi_c}(G^\tau)=\eta(t,t'')\sum_{i=1}^{k} \min_{u\in V}\frac{vol(u,t_i,t'_i)}{vol(u,t,t'')}\lambda_2(\mathcal{N}_i)$. The differences from the composite bound of the prefix $\uline{\phi_c}(G^{[t,t']})$ is in (i) the denominator of the fraction 
and (ii) the normalization $\eta(t,t'')$ on the RHS.\vsa 
\begin{thm} \label{thm:groupbound} {\bf[Group composite bound]}
Let $\tau=(t,t',t'')$ be a group of shared-prefix intervals, then $\uline{\phi_c}(G^\tau)\leq\uline{\phi_c}(G^{[t,t^*]}), \forall t^*\in [t',t''].$\vsa
\end{thm}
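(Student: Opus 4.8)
The plan is to prove the inequality by a summand-by-summand comparison between the group bound $\uline{\phi_c}(G^\tau)$ and the composite bound of an arbitrary member interval $[t,t^*]$, exploiting three facts that all push in the same direction: monotonicity of the normalization $\eta$, monotonicity of volume in the interval length, and non-negativity of eigenvalues. Fix $t^*\in[t',t'']$. Since the composite bound is a valid lower bound for \emph{any} partition of its interval, I would evaluate $\uline{\phi_c}(G^{[t,t^*]})$ using the partition that extends the prefix partition $\{[t_i,t_i']\}_{i=1}^{k}$ of $[t,t']$ by whatever consecutive subintervals are needed to cover $[t',t^*]$ (no extension is needed when $t^*=t'$). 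This aligns the first $k$ summands of the two bounds so that they share the same numerators $vol(u,t_i,t_i')$ and the same eigenvalues $\lambda_2(\mathcal{N}_i)$.

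First I would discard the extra summands. Each subinterval covering $[t',t^*]$ contributes a term $\min_{u\in V}\frac{vol(u,\cdot)}{vol(u,t,t^*)}\lambda_2(\mathcal{N}_i)$, which is non-negative because volumes are non-negative and $\lambda_2(\mathcal{N}_i)\ge 0$. Dropping them gives
$$\uline{\phi_c}(G^{[t,t^*]}) \geq \eta(t,t^*)\sum_{i=1}^{k}\min_{u\in V}\frac{vol(u,t_i,t_i')}{vol(u,t,t^*)}\lambda_2(\mathcal{N}_i).$$
Next I would compare the remaining $k$ summands with those of $\uline{\phi_c}(G^\tau)$ using two monotonicities. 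Because $t^*\le t''$ and $\alpha\ge 0$, the normalization $\eta(t,t')=(t'-t)^{-\alpha}$ is non-increasing in the interval length, so $\eta(t,t'')\le\eta(t,t^*)$. Because edge weights are non-negative, volume accumulates, giving $vol(u,t,t^*)\le vol(u,t,t'')$ for every $u$; hence for each fixed $i$ the pointwise inequality $\frac{vol(u,t_i,t_i')}{vol(u,t,t'')}\le\frac{vol(u,t_i,t_i')}{vol(u,t,t^*)}$ holds for all $u$, and taking minima preserves it. Since $\lambda_2(\mathcal{N}_i)\ge 0$ is a common non-negative factor, each group summand is at most the corresponding member summand; summing and applying $\eta(t,t'')\le\eta(t,t^*)$ yields $\uline{\phi_c}(G^\tau)\le\eta(t,t^*)\sum_{i=1}^{k}\min_{u\in V}\frac{vol(u,t_i,t_i')}{vol(u,t,t^*)}\lambda_2(\mathcal{N}_i)\le\uline{\phi_c}(G^{[t,t^*]})$, as required.

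The individual steps are routine, so the point demanding care is the interaction with the $\min_{u\in V}$ operator: one must justify that enlarging the denominator to $vol(u,t,t'')$ only decreases the minimized ratio. This is exactly the fact that $f(u)\le g(u)$ for all $u$ implies $\min_u f(u)\le\min_u g(u)$, applied with $f$ the group ratio and $g$ the member ratio; the minimizing node may differ between the two bounds, but the inequality survives. A secondary point worth stating explicitly is that the argument never uses which admissible partition covers $[t',t^*]$ (those terms are dropped), so a single computation of $\uline{\phi_c}(G^\tau)$ certifies the lower bound for \emph{every} member of the group simultaneously, which is precisely what makes group-level pruning sound.
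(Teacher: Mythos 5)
Your proof is correct and follows essentially the same route as the paper's: a termwise comparison driven by the monotonicity of $\eta(t,\cdot)$ in the interval length and of $vol(u,t,\cdot)$ in the right endpoint, with the observation that pointwise domination of the ratios survives the $\min_{u\in V}$. You are in fact slightly more careful than the paper, which evaluates $\uline{\phi_c}(G^{[t,t^*]})$ using only the prefix partition of $[t,t']$ and leaves implicit the step you make explicit --- extending the partition to cover $[t',t^*]$ and discarding the resulting non-negative extra summands.
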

\begin{proof}
Available in the Appendix.
\end{proof}\vsb\vsa
\eat{
prefix that is at least $50\%$ of the total length of each interval---that is:
$$P_{[a,c]} = \{[a,b],[a,b+1],...,[a,c]\}$$
where $b = \ceil{(a+c)/2}$. For any $I=[a,b+\tau]\in P_{[a,c]}$, we partition the interval $[a,b]$ into $\big\{[t_1,t'_1]\ldots[t_k,t'_k]\big\}$ as before and then treat $[b+1,b+\tau]$ as a single subinterval, giving:


$$\uline{\phi_c}(G^I) = \eta(a,b+\tau)\Big(\sum_{i=1}^{k} \min_{u\in V}\frac{vol(u,t_i,t'_i)}{vol(u,a,b+\tau)}\lambda_2(\mathcal{N}_i) + $$
$$\min_{u\in V}\frac{vol(u,b+1,b+\tau)}{vol(u,a,b+\tau)}\lambda_2(\mathcal{N}_{[b+1,b+\tau]})\Big)$$
\begin{lemma}
For all $I\in P_{[a,c]}$,
$$\eta(a,c)\sum_{i=1}^{k} \min_{u\in V}\frac{vol(u,t_i,t'_i)}{vol(u,a,c)}\lambda_2(\mathcal{N}_i) \leq \uline{\phi_c}(G^I)$$
\end{lemma}
This follows because $[a,c]$ is at least as long as $[a,b+\tau]$; therefore $vol(u,a,c) \geq vol(u,a,b+\tau)$ and ${\eta(a,c) \leq \eta(a,b+\tau)}$. $\qed$

If we have an estimate smaller than the quantity above, we can discard the entire pruning group at once.
}
\vsb
\subsection{\ouralg: Prune, HASh and Refine}
\vsa
\label{sec:alg}



\begin{algorithm}[t]
\begin{algorithmic}[1]
\caption{\ouralg}
\footnotesize
 \REQUIRE $G(V,E,W)$, $\alpha$, LSH rows $r$, bands $b$, pruning res. $l$
 \ENSURE A set of temporal communities $\mathcal{C}=\{(C_i,t_i,t'_i)\}$
 \STATE Compute bounds $\Phi$ at scales $l^i,i=0\ldots\ceil{log(T)}$ 
 \STATE Compute an estimate $\phi^*$ using $\Phi$
 \STATE Prune intervals $[t,t']\in\tau$ based on $\uline{\phi_c}(G^\tau)\geq\phi^*$
 \STATE Prune remaining intervals $[t,t']$ based on $\uline{\phi_c}(G^{[t,t']})\geq\phi^*$ 
 \FORALL{$(u,t) \in (V,[1\ldots T])$}
   \FORALL{scales $s\in{1\ldots T/2}$}
     \IF { $\exists$ an unpruned $[l,r]\in[t-s,t+s],$ } 
       \STATE \emph{Hash($N_u^t,k^*(s),r,b$)}
     \ENDIF
    \ENDFOR
 \ENDFOR
 \FOR{$\forall$ Buckets $B$ sorted by \emph{fill-factor}}
   \STATE $[l,r]=$ interval of $B$ 
   \IF { $\phi_c(G^{[l,r]})<\phi^*$} 
     \STATE $(C,t,t')=$\emph{Refine($B$)}
     \STATE $\phi^*=\min(\phi^*,\phi(C,t,t'))$
     \STATE Add $(C,t,t')$ to $\mathcal{C}$
   \ENDIF
  \ENDFOR
  \STATE RETURN $\mathcal{C}$
\end{algorithmic}
\label{alg}
\end{algorithm}

The steps of our overall method \ouralg are detailed in Alg. 1. We first pre-compute the eigenvalues for a set $\Phi$ of $O(T)$ intervals as outlined in Sec.~\ref{sec:prune} (Step 1) and find an estimate $\phi^*$ of the solution by probing a constant number of promising periods of small $\lambda_2$ in $\Phi$ (Step 2). We employ a light-weight version of hashing in those periods. Then we prune groups by composing their bounds $\uline{\phi_c}(G^\tau)$ based on $\Phi$ (Step 3), and for unpruned groups we compute composite bounds of individual intervals and attempt to prune them (Step 4). We next hash neighborhoods $N_u^t$ of nodes, targeting all possible scales $s$ for collision that include unpruned intervals (Steps 5-11). To target a particular time scale $s$, we select the appropriate number of time pivots $k^*(s)$ according to Thm.~\ref{thm:optk}(Step 8). Next, we process collision buckets $B$ containing sets of neighborhoods $N_u^t$ ordered by a decreasing fill-factor, quantifying the consistency of node sets in the timestamps with the bucket (Steps 12-19). If the interval spanned by the current bucket's timestamps cannot be pruned, we form the aggregated graph $G^{[l,r]}$ and we compute the lowest temporal community $C$ around the seed nodes in the bucket using the spectral sweep method by Anderesen et al.~\cite{Andersen2006} (Step 15); briefly, this operates by considering just the top-ranking node in the bucket, then the top two, then the top three, and so on. We maintain the best estimate in $\phi^*$ to enable more pruning of buckets to be processed (Step 16) and add $C$ to the result set $\mathcal{C}$ (Step 17). Finally, we report $\mathcal{C}$. Note that we can easily maintain and report multiple top communities in Steps 12-20.
\begin{table*}
\centering
\footnotesize
\begin{tabular}{|l|c|c|c|c|c|c|c|c|c|c|c|}
    \multicolumn{4}{c}{} & \multicolumn{2}{|c|}{PHASR} 
                         & \multicolumn{2}{|c|}{EXH~\cite{Andersen2006}$^*$} 
                         & \multicolumn{2}{|c|}{H+RW~\cite{macropol2010scalable}$^*$} &
    \multicolumn{2}{|c|}{L-metric~\cite{takaffoli2013incremental}} \\
    \hline
    Dataset & $|V|$ & $\bar{|E|}$ & $T$ & Time & $\phi$ & Time & $\phi$& Time & $\phi$& Time & $\phi$ \\ \hline \hline
    Synth. & 1k-15k & 20k-300k & $1k$ & \bf{$76s$} & {\bf $0.017$} & days & n/a & days & n/a & $6486s$ & $0.042$\\ \hline
    Road & $100$ & $128$ & $1k$ & $118s$ & {\bf $0.039$} & days & n/a & $>24h$ & n/a & \bf{$1s$} & $0.099$\\ \hline
    Internet & $2542$ & $12699$ & $120$ & \bf{$2.3h$} & \bf{$0.008$} & days & n/a & $>24h$ & n/a  & $>6h$ & n/a \\ \hline
    Call & $1333$ & $756k$ & $24$ & \bf{$28m$} & 0.0032 & days & n/a & $>24h$ & n/a & $3.5h$ & 0.215 \\ \hline
\end{tabular}
\caption{Data sets used for experimentation and comparison to the L-Metric dynamic community method~\cite{takaffoli2013incremental}. $^*$Comparisons to other baselines were infeasible on the full datasets. See Fig.~\ref{fig:syntheticScal} for results on smaller versions of these datasets.}\vsb\vsa\vsc
\label{tbl:data}\vsb
\end{table*}

\noindent {\em Complexity analysis:\em} Precomputing $\Phi$ requires $O(T)$ eigenvalue computations, since we consider non-overlapping intervals of exponentially increasing sizes. The group pruning requires $O(T\log^2 T |V|)$ time, since when ensuring overlap of at least $\beta<1$ among interval group members in the grouping, we get $O(\log T)$ groups for every starting position and a total of $O(T\log T)$ groups. To compute the composite group bound we need at most $O(\log T)$ precomputed eigenvalue intervals and a scan over the node volumes, arriving at the final group pruning complexity. It is important to note that the eigenvalue computations, pruning and candidate verification can all be trivially parallelized on common MapReduce-like systems. The time spent in the remainder of the algorithm depends on the effectiveness of the pruning steps which, as we show in the evaluation, are able to filter most intervals given the existence of outstanding local temporal communities.  
\vsb
\section{Experimental evaluation}
\vsa
\label{sec:exp}
We evaluate the quality and scalability of our approach in both synthetic and real-world networks.
Of main interest are the running time savings due to the pruning enabled by our bounds and the quality of candidate communities produced by hashing. We conduct all experiments on a $3.6$GHz Intel processor with $16$GB of RAM. All algorithms are implemented as single-thread Java programs.
To compute eigenvalues, we employ the implementation of the Lanczos algorithm from the \emph{Matrix Toolkit Java}\footnote{Matrix Toolkit Java from \url{https://github.com/fommil/matrix-toolkits-java.}} 
\begin{figure*}[ht]
\centering
\hspace{-0.1in}
\subfigure[][Pruning \% (Synthetic)]
{
 \centering
  \includegraphics[width=0.24\textwidth]{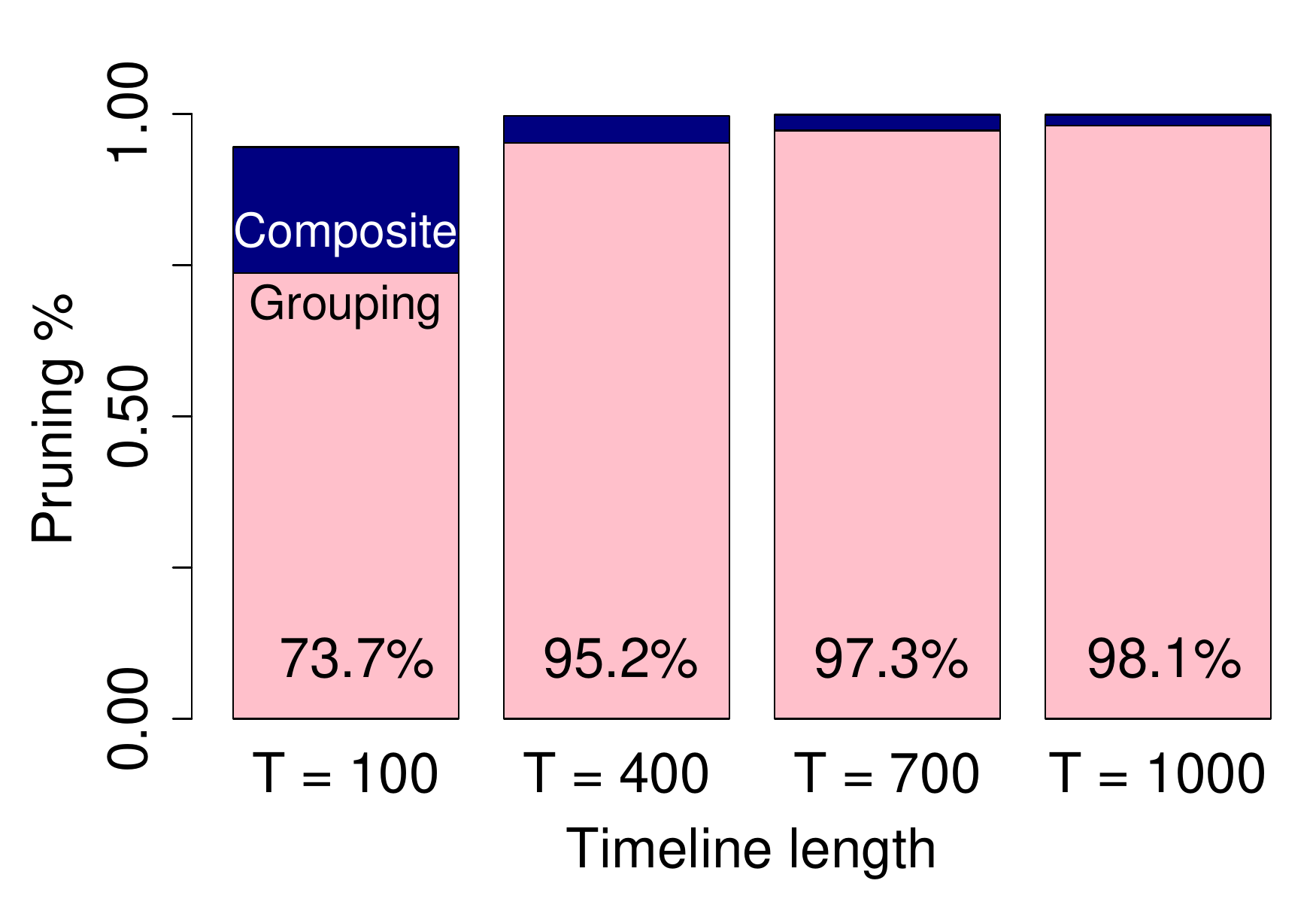}
  \label{fig:pruning_syn}
}\hspace{-0.1in}
\subfigure[][Pruning time (Synthetic)]
{
\centering
  \includegraphics[width=0.22\textwidth]{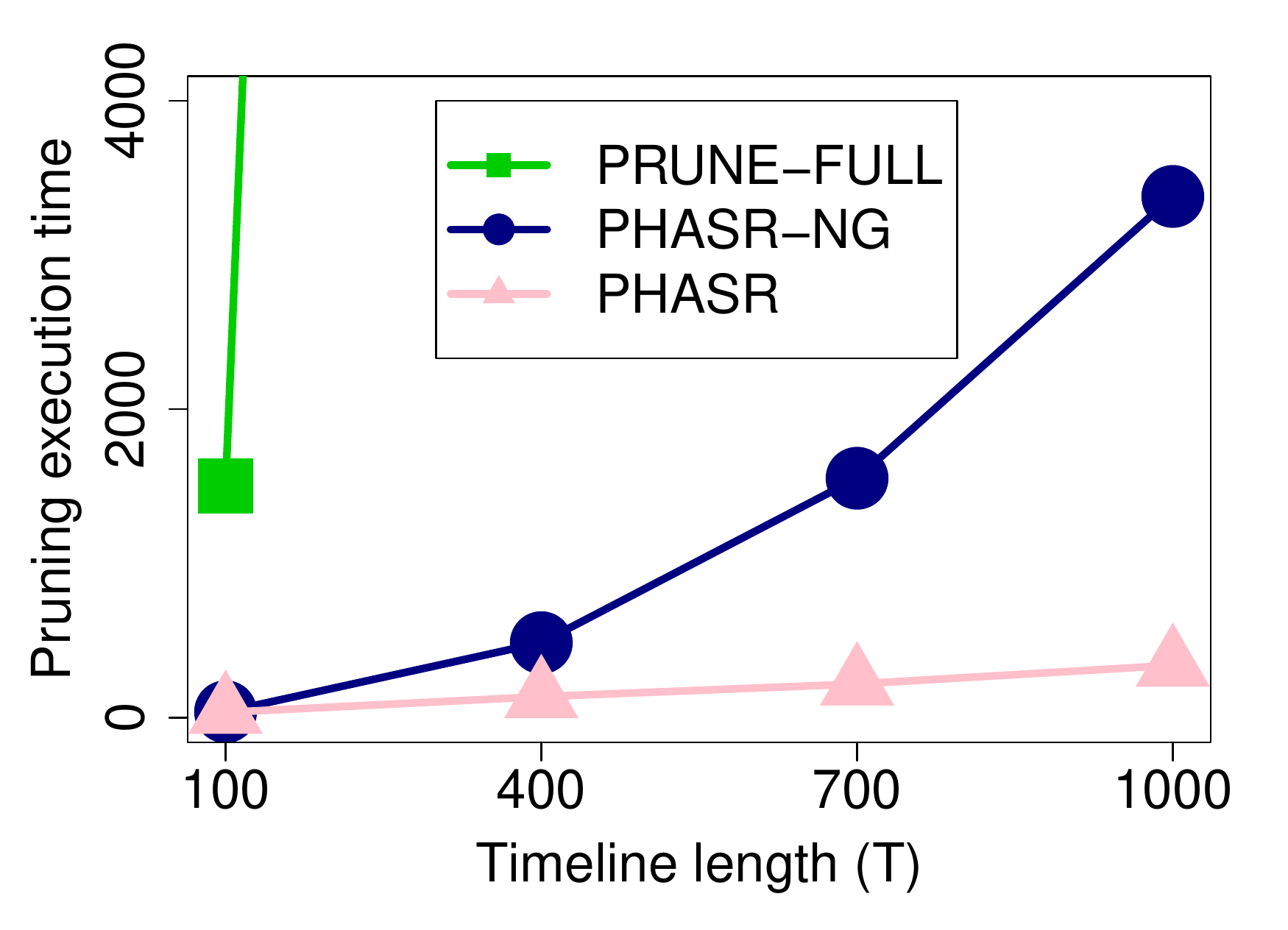}
  \label{fig:time_syn}
}\hspace{-0.1in}
\subfigure[][Pruning \% (Road)]{
\centering\hspace{-0.1in}
  \includegraphics[width = 0.21\textwidth]{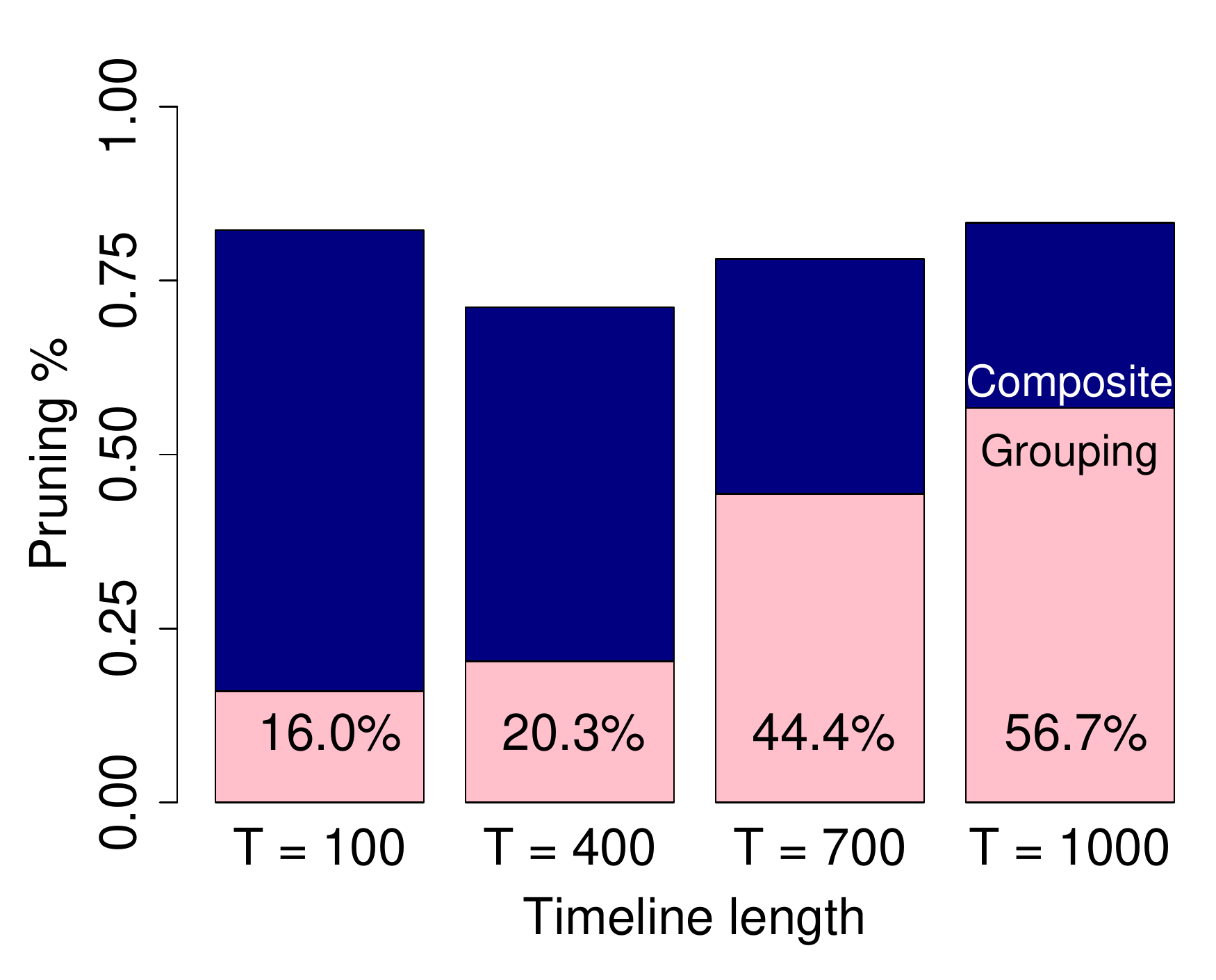}
  \label{fig:pruning_road}
}\hspace{-0.1in}
\subfigure[][Pruning time (Road)]
{
\centering
  \includegraphics[width=0.22\textwidth]{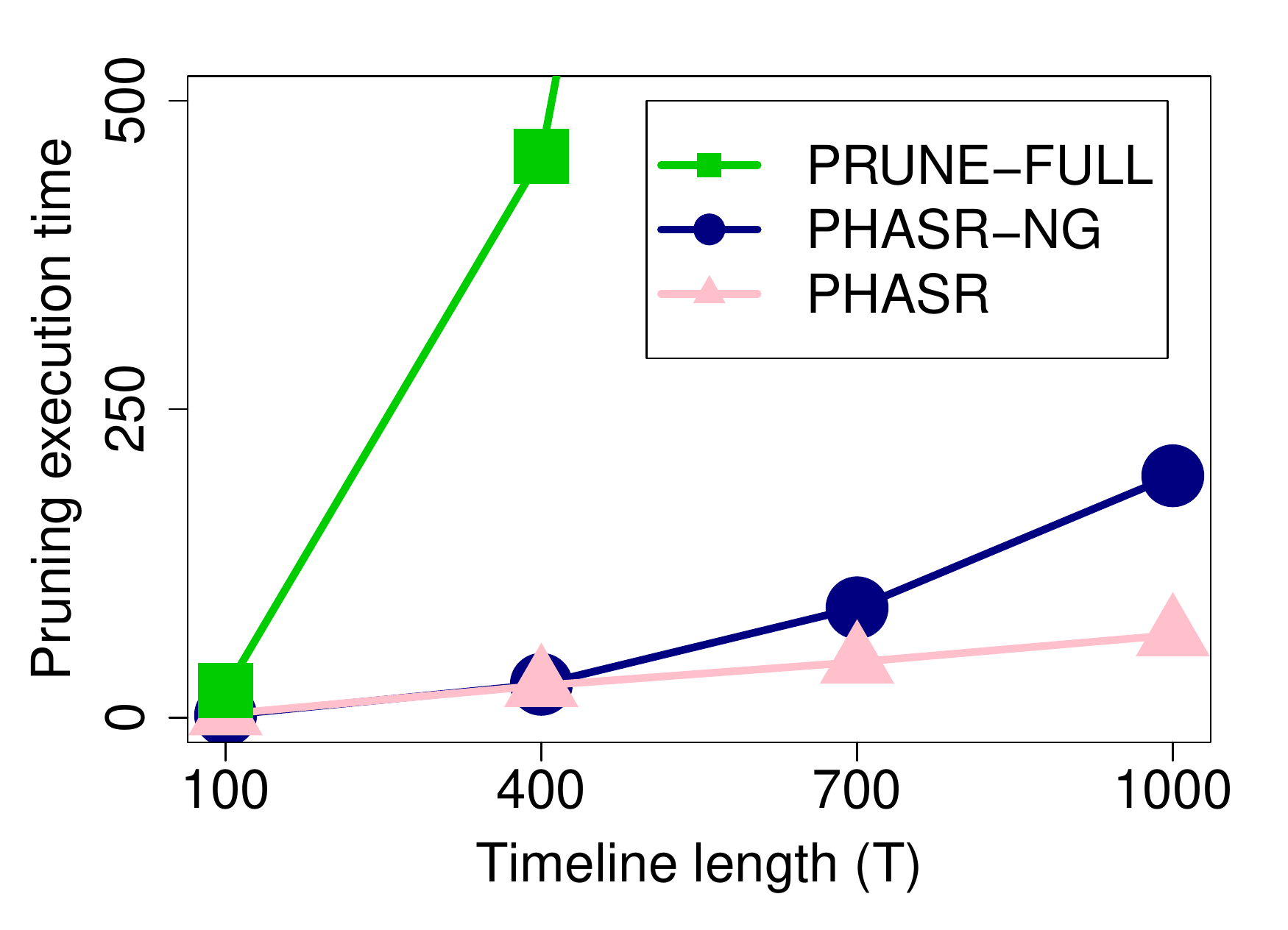}
  \label{fig:time_road}
}\hspace{-0.1in}

\subfigure[][Pruning \% (Internet)]{
\centering\hspace{-0.1in}
  \includegraphics[width = 0.21\textwidth]{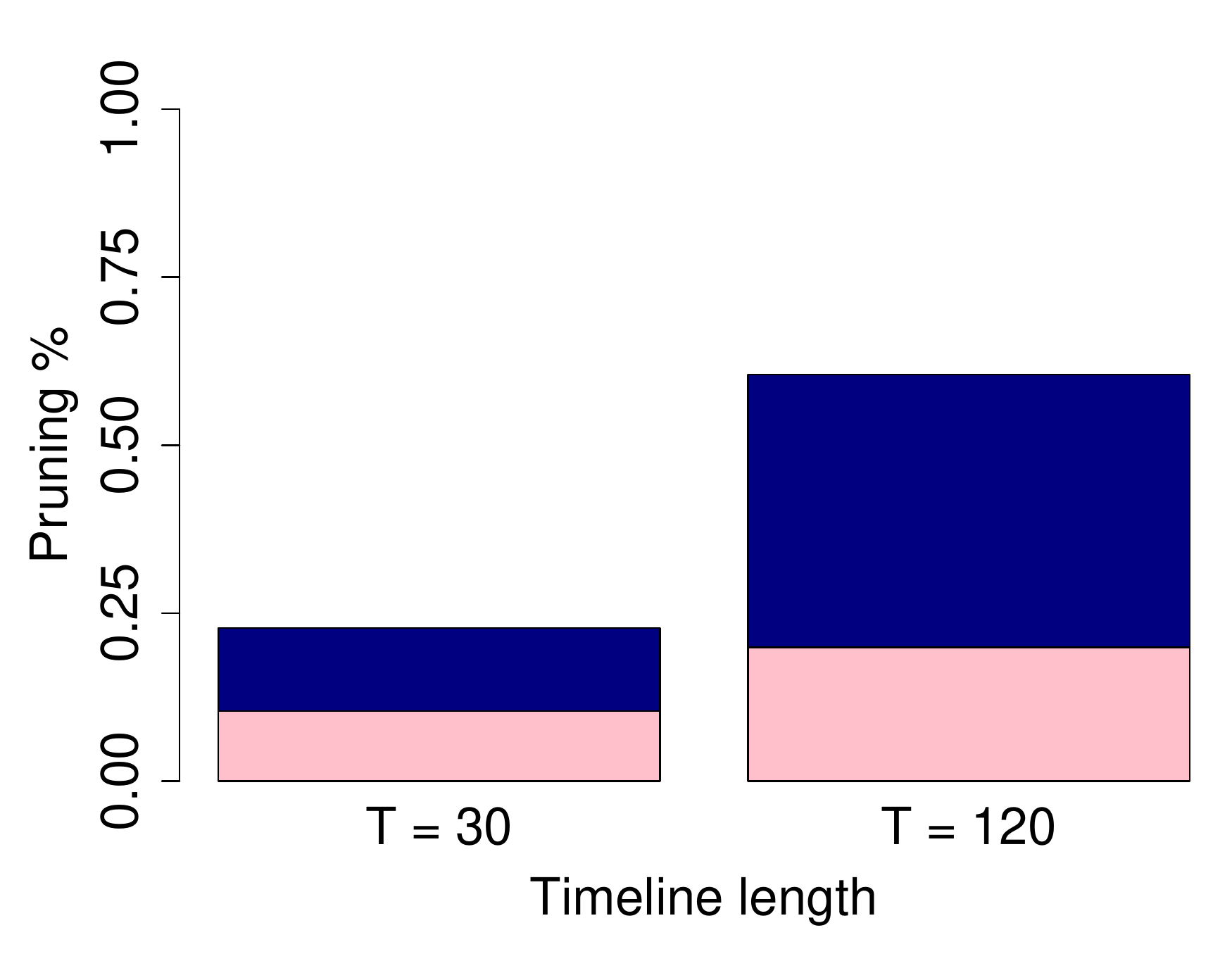}
  \label{fig:pruning_net}
}\hspace{-0.1in}
\subfigure[][Pruning time (Internet)]
{
\centering
  \includegraphics[width=0.21\textwidth]{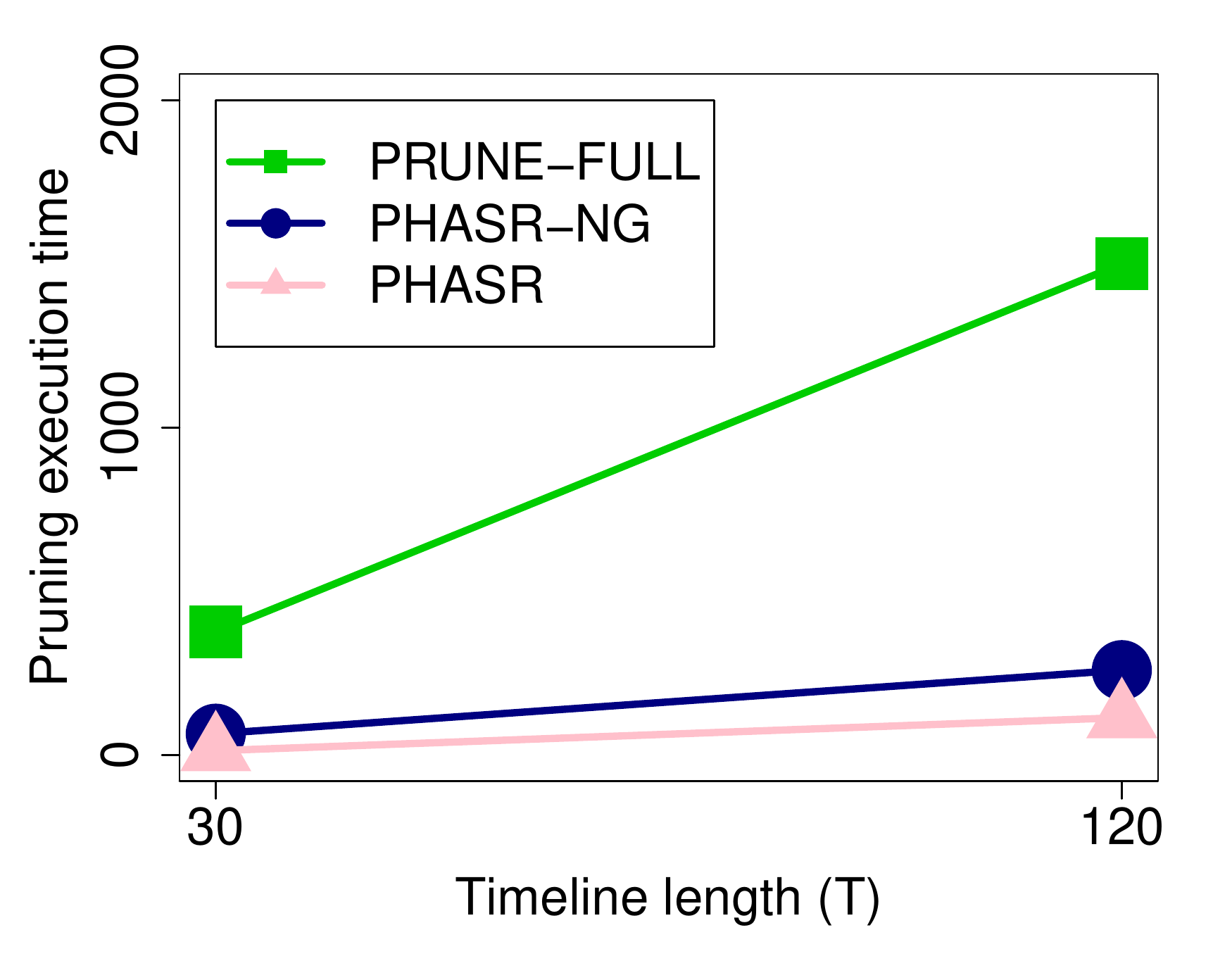}
  \label{fig:time_net}
}\hspace{-0.1in}
\subfigure[][Effect of $\eta(\alpha)$]
{
 \centering
  \includegraphics[width=0.23\textwidth]{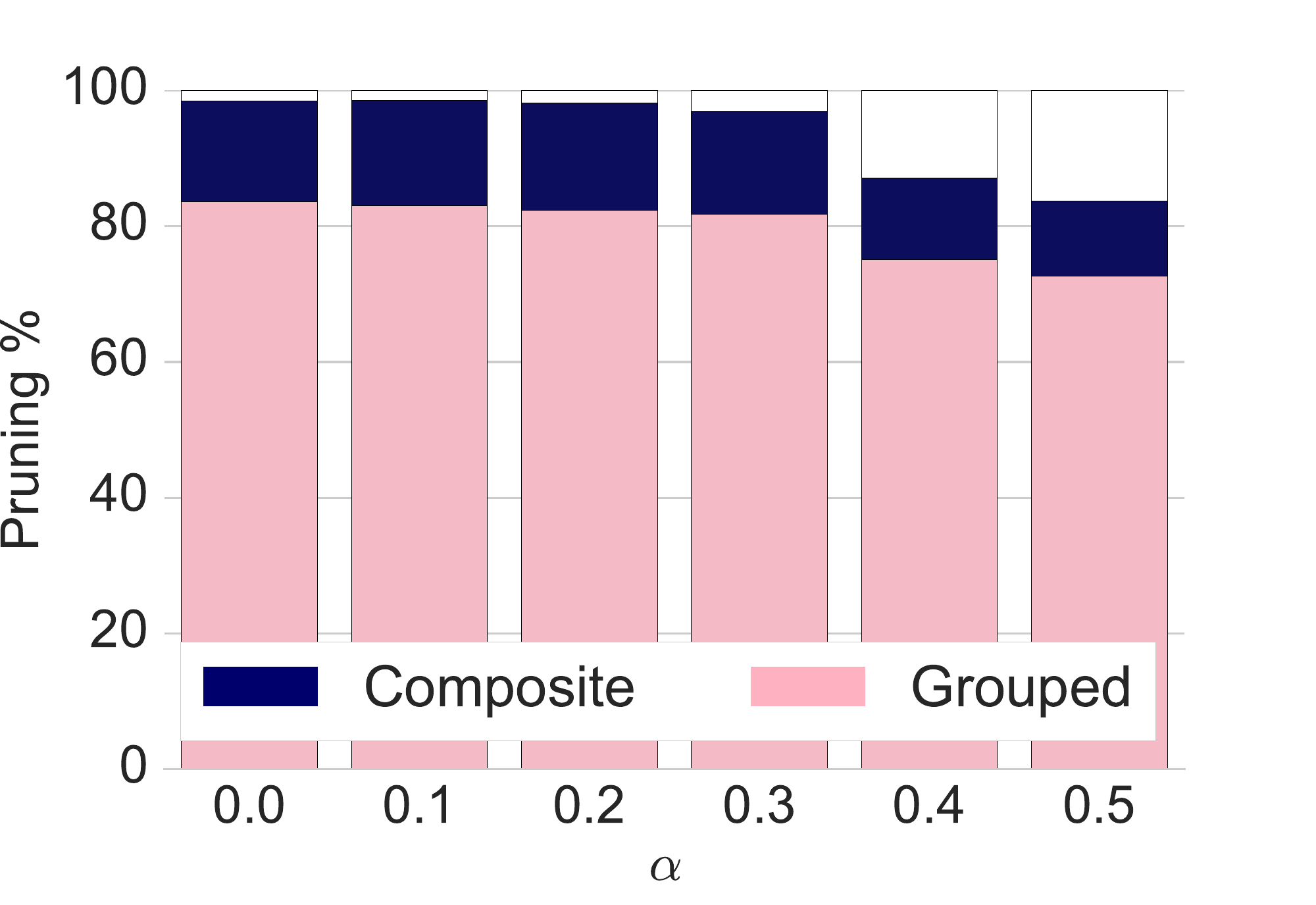}
  \label{fig:prunenorm}
}\hspace{-0.2in}
\subfigure[][Heatmap of pruned intervals]{
\centering
  \includegraphics[width = 0.24\textwidth]{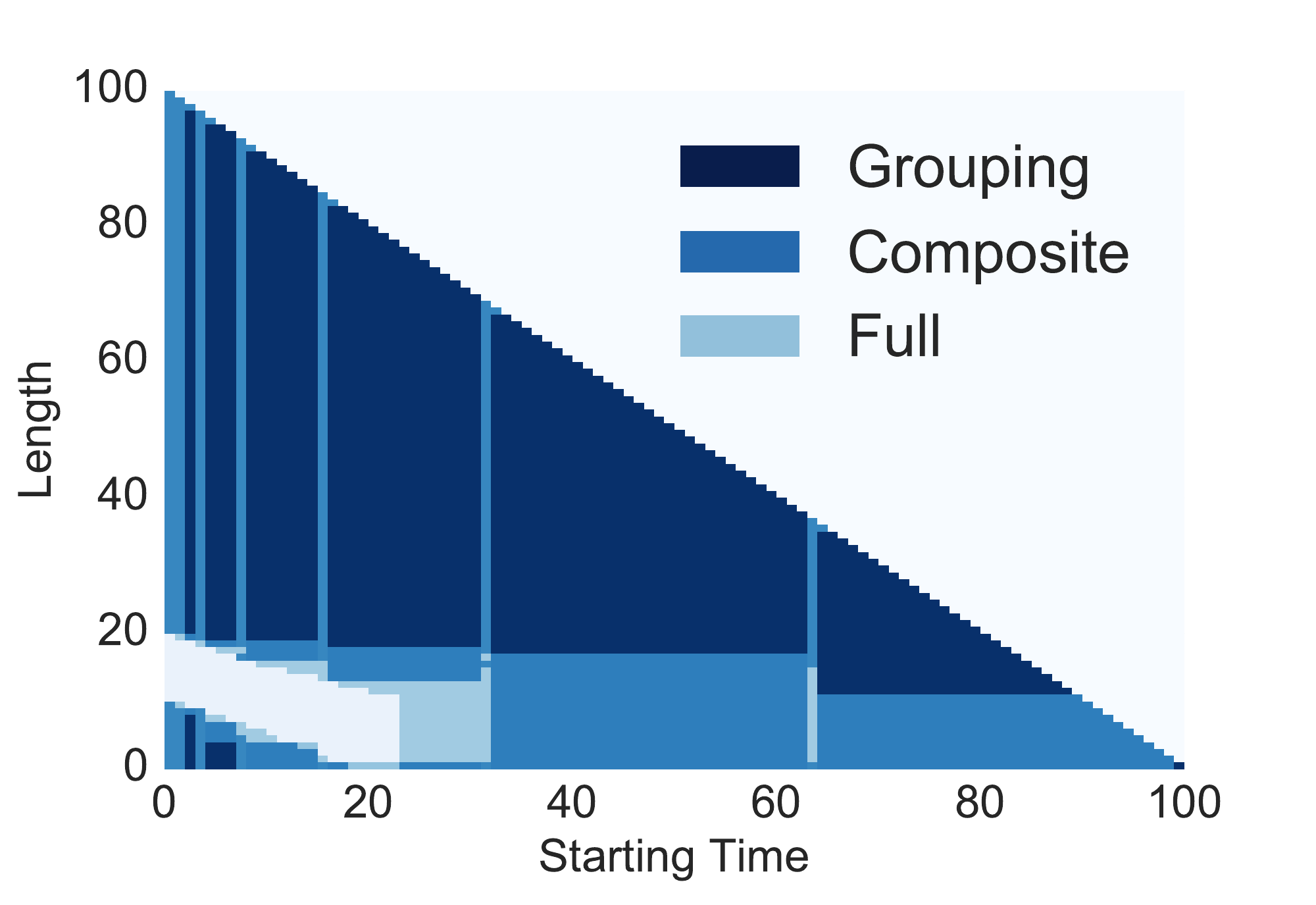}
  \label{fig:heat}
}\hspace{-0.2in}\vsb

\caption{\subref{fig:pruning_syn} Percentage of all pruned intervals using the group and then composite pruning in \ouralg. \subref{fig:time_syn} Comparison of the time taken for pruning with (\ouralg) and without grouping (\ouralg-NG), and full calculation of all Cheeger bounds (PRUNE-FULL). Similar comparisons for the Internet~\subref{fig:pruning_net},\subref{fig:time_net} and Road~\subref{fig:pruning_road},\subref{fig:time_road} networks. \subref{fig:prunenorm} Pruning power in synthetic data for different levels of temporal normalization $\alpha$. \subref{fig:heat} Heatmap representation of the pruned intervals in one of the synthetic datasets for $T=100$ where each pixel represents an interval with horizontal coordinate its starting point and vertical coordinate its duration. Colors encode the kind of bound that enabled pruning the specific interval.}
\label{fig:PruneRes} \vsc
\end{figure*}
\vsb
\subsection {Datasets and competing techniques}

\noindent{\bf Datasets: } We use preferential attachment synthetic networks~\cite{barabasi-albert} of sizes between $1k$ and $15k$ nodes and average degree of $20$. 
We replicate the unweighted network structure over $T=1000$ timestamps and assign Poisson random weights with mean $5$ on edges in time independently. We inject a strong temporal community $(C,t,t')$ of length $t'-t=10$ by increasing the average weight on the internal edges. The community \emph{contrast} is defined as the ratio of the mean weight in $(C,t,t')$ and the global average of $5$; we used a contrast value of $8$ in synthetic data unless otherwise specified.

We also use real-world datasets of various length, number of nodes and density listed in Tab.~\ref{tbl:data}. The \emph{Road} traffic dataset is a subnetwork of the California highway system. Edge (road segments) are weighted based on the average speed at $5m$ intervals. In this dataset we aim to detect contiguous subnetworks of abnormal speeds over time. To detect high- and low-speed temporal subgraphs we assign weights as $\mathcal{V}^2$ or $(85-\mathcal{V})^2$ respectively, where $\mathcal{V}$ is the speed in mph at a given time. Execution times for both weighting schemes are similar. The \emph{Internet} traffic data is a $2h$ trace of all p2p web traffic at the level of organizations (first three bytes of host IPs) from June 2013 on a backbone link in Japan, where weights are assigned as the number of packets between a pair of organizations at $1m$ resolution~\cite{Cho2000mawi}. Our densest dataset is a \emph{Call} graph among sectors of the city of Milan, Italy over $24h$ period~\cite{OBD}. Edge weights correspond to the number of calls between sectors within an hour. 

\noindent{\bf A note on data sizes and parallelization opportunities: }It is important to note that while the real-world networks we employ for experimentation are in the order of thousands of nodes, the search space over all possible intervals requires consideration of $O(T^2)$ differently weighted graphs of that size. For example, in our \emph{Call} dataset all-interval graphs contain cumulatively \textit{208 million edges} while in the largest Synthetic dataset the cumulative number edges exceeds \textit{1.4 billion}. This large search space is the reason why exhaustive baselines do not scale to instances of such sizes. In addition, our technique can be easily parallelized employing a bulk synchronous processing system (BSP) such as Hadoop~\footnote{Hadoop. \url{http://hadoop.apache.org/}}, since the underlying building blocks of eignevalue computations, hashing and aggregation of colliding neighborhoods fit naturally the BSP programming paradigm. A parallel implementation and corresponding scalability experiments are beyond the scope of this work, but we plan to included them in an extended version of this work.

\noindent{\bf Baselines: }We compare \ouralg to three baselines: (1) exhaustive (\emph{EXH}) temporal extension of the spectral sweep method by Andersen et al.~\cite{Andersen2006}; (2) a temporal extension of the hashing community detection (\emph{H+RW})~\cite{macropol2010scalable}; and (3) the incremental \emph{L-Metric} for local communities in dynamic graphs by Takaffoli et al.\cite{takaffoli2013incremental}. \emph{EXH} performs spectral sweeps in all possible intervals and starting from all nodes. \emph{H+RW} hashes neighborhoods in the graphs of all possible time intervals (thus can be viewed as naive dynamic extension of~\cite{macropol2010scalable}), and then performs a sweep from seeds identified by hashing. Hashing candidates do not form low-conductance communities on their own since~\cite{macropol2010scalable} does not consider cuts. \emph{L-Metric}~\cite{takaffoli2013incremental} incrementally extends local communities in time, by using the connected components of communities from the previous time step as seeds. 
Since it allows communities to change over time, we implement a post-processing step in which we maintain the largest node intersections for all possible intervals of a contiguous community in order to obtain dynamic communities of fixed membership. We also consider two versions of \ouralg: explicit computation of all interval bounds for pruning, termed \emph{(PRUNE-FULL)} and our method using composite and group bounds \ouralg. We evaluate the pruning power for different pruning strategies, the scalability of competing techniques and the effect of parameters for \ouralg in what follows.

\vsb
\subsection{Pruning power}
\vsa

\begin{figure*}[t]\vsb
\centering
\hspace{-0.1in}
\subfigure[][Scalability Synthetic]
{
 \centering
  \includegraphics[width=0.238\textwidth]{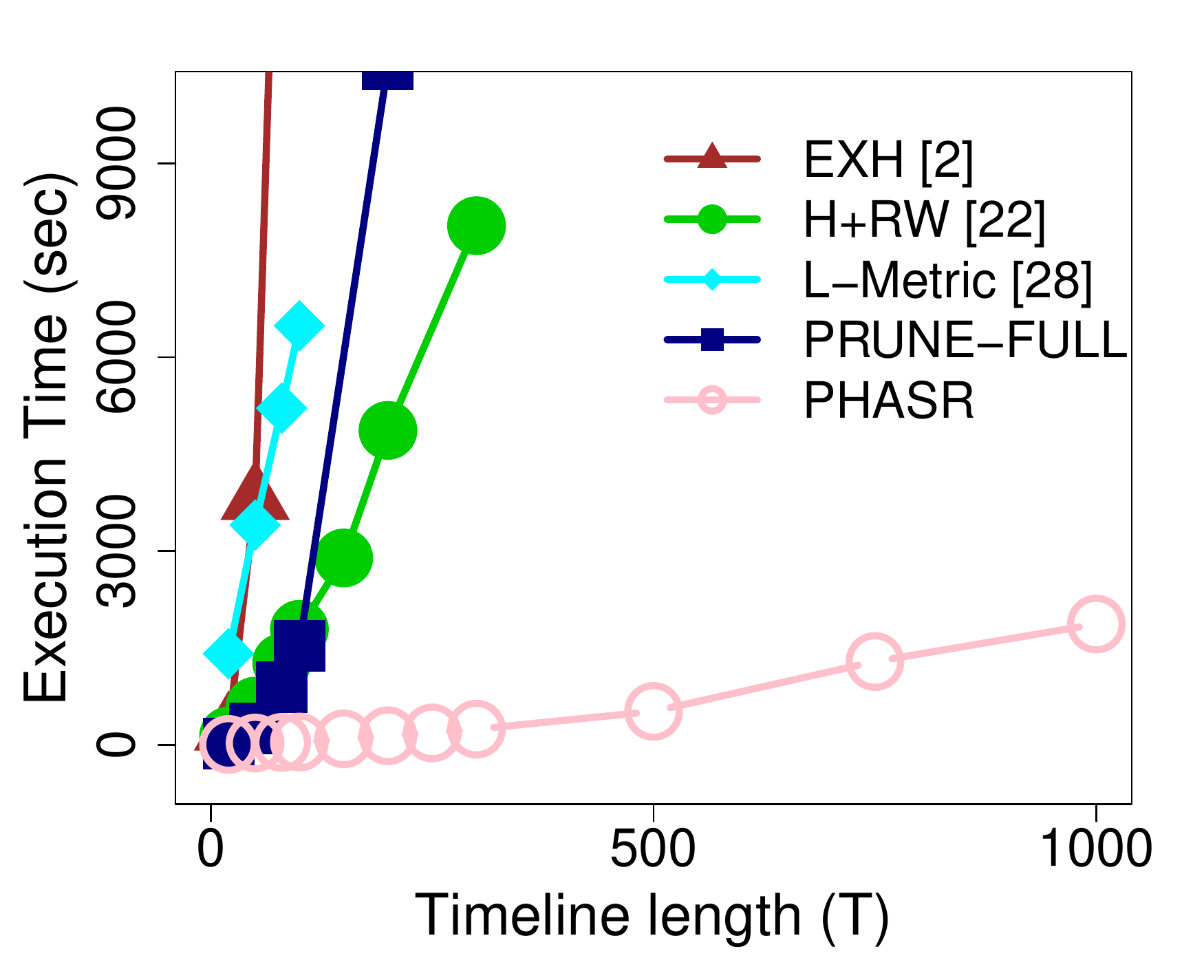}
  \label{fig:ssyn}
}
\hspace{-0.1in}
\subfigure[][Scalability Road]{
\centering
  \includegraphics[width = 0.25\textwidth]{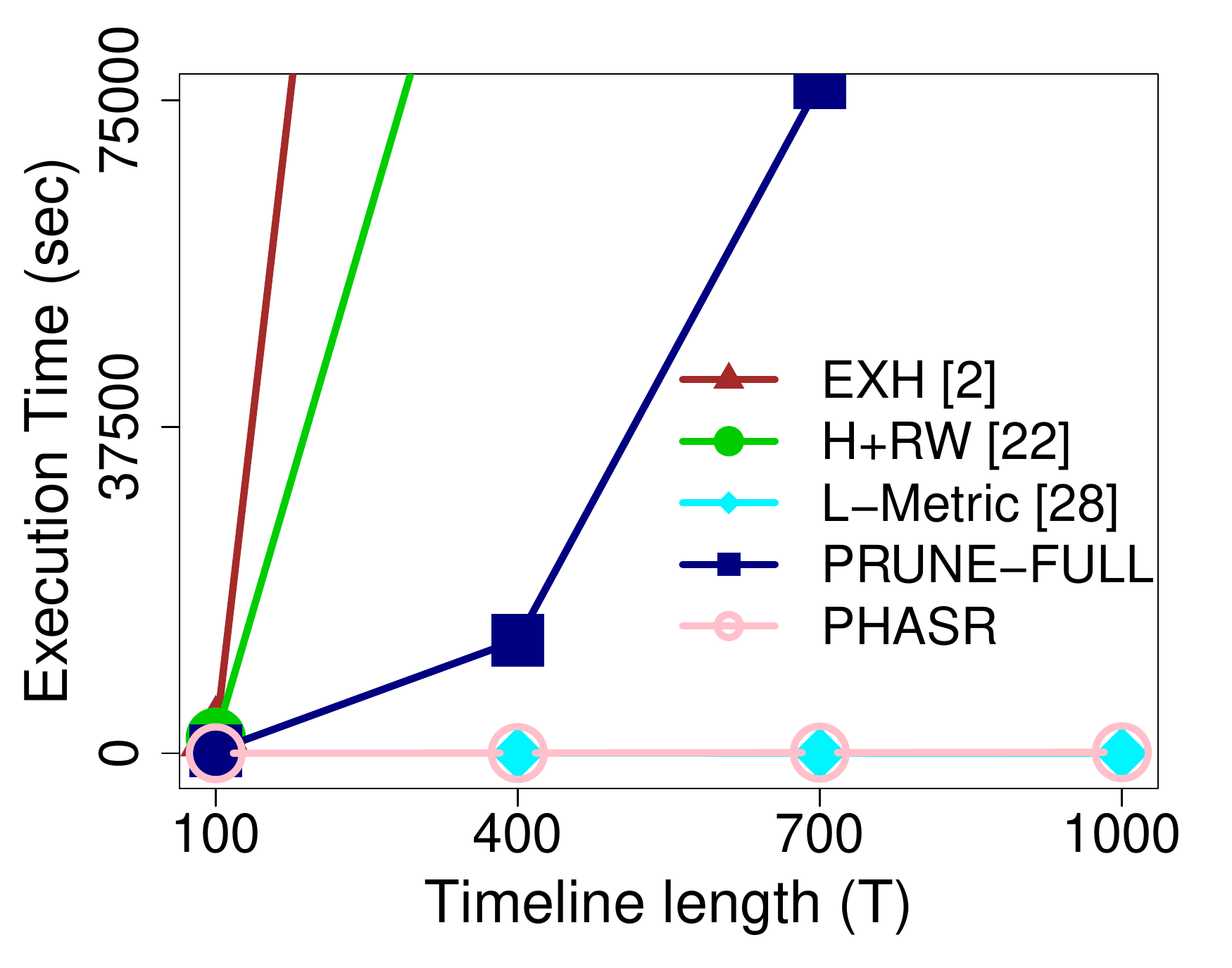}
  \label{fig:scaleRoad}
}\hspace{-0.1in}
\subfigure[][Scalability Internet]
{
\centering
  \includegraphics[width=0.25\textwidth]{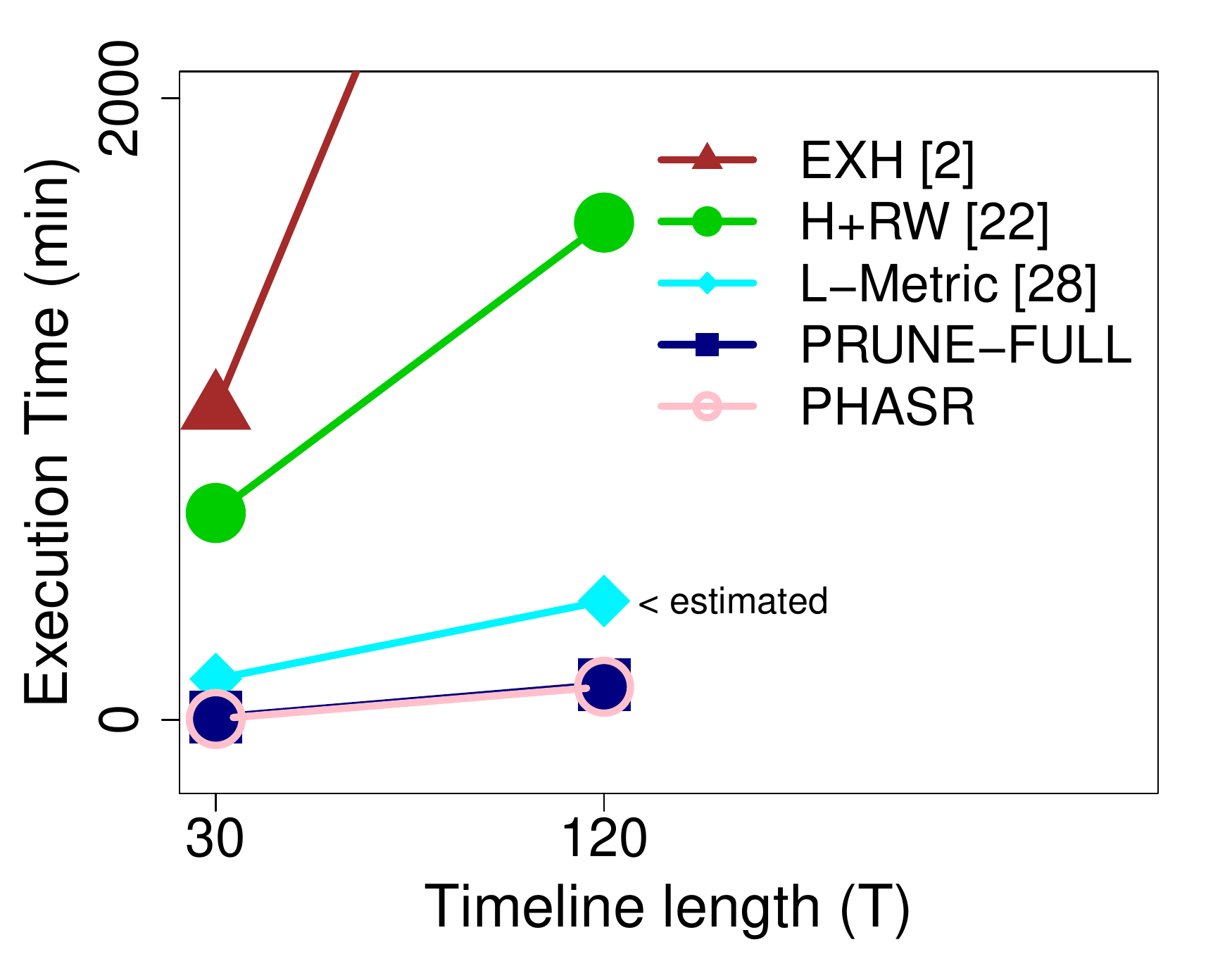}
  \label{fig:scaleInternet}
}
\hspace{-0.1in}
\subfigure[][Scalability in $|V|$]{
\centering\hspace{-0.1in}
  \includegraphics[width = 0.24\textwidth]{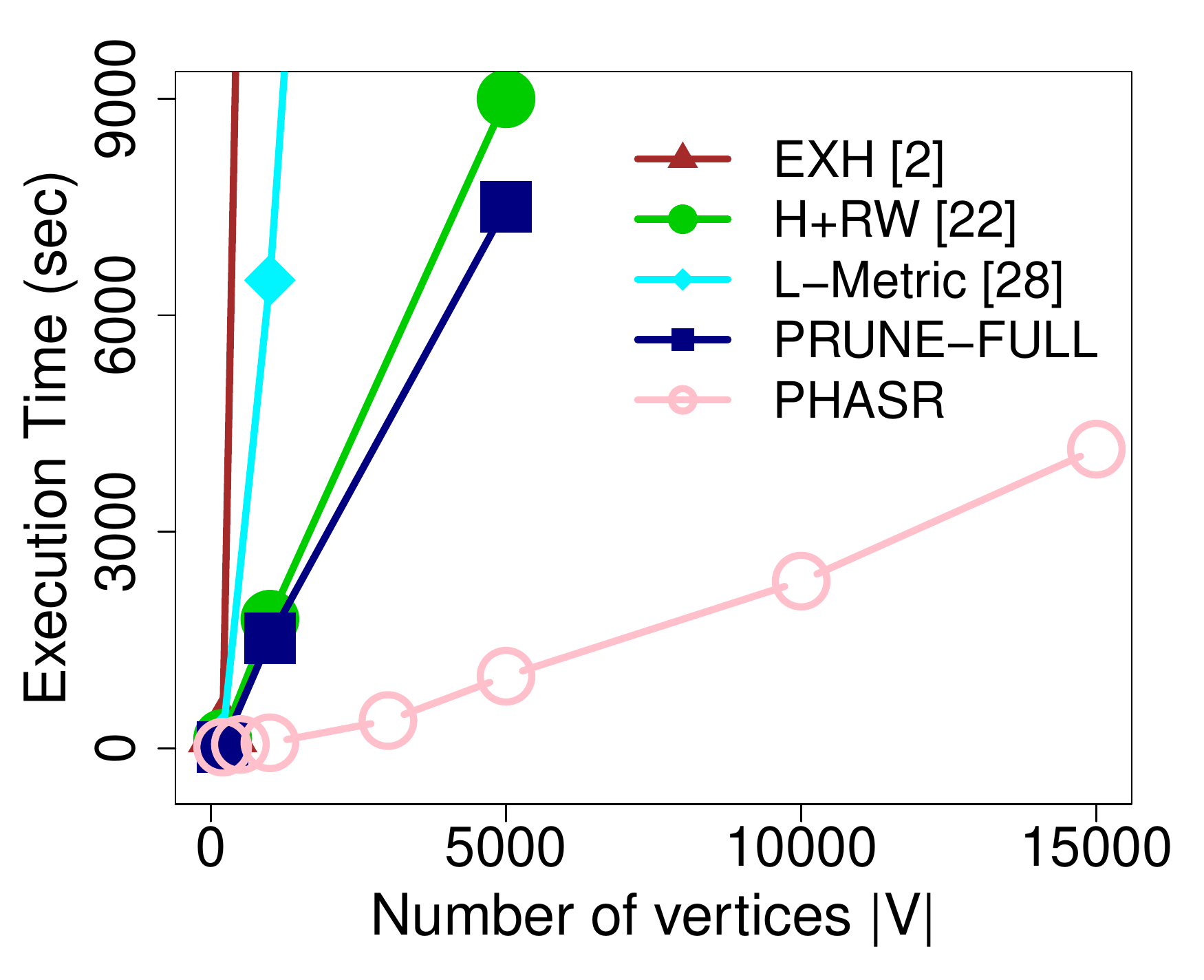}
  \label{fig:scalegraph}
}\hspace{-0.1in}\vsb
\caption{Comparison of \ouralg's total running time with that of alternatives for increasing $T$ in \subref{fig:ssyn} Synthetic ($|V|=1000$); \subref{fig:scaleRoad} Road; and \subref{fig:scaleInternet} the Internet datasets. \subref{fig:scalegraph}: Scalability with the graph size $(T=100)$ on the Synthetic dataset.  
}
\label{fig:syntheticScal}\vsb
\end{figure*}
We evaluate the pruning power of our bounds in both synthetic and real world datasets Fig.~\ref{fig:pruning_syn} shows the average pruning percentage of all intervals in the synthetic data for increasing $T$ and an injected community of a fixed length of $10$. We prune more than $95\%$ of the possible intervals across all lengths. The fast grouping phase prunes the majority of the intervals ranging from $73\%$ when the injected community is $1/10$-th of the timeline to more than $98\%$ of the intervals for $T=1000$. The execution time is presented in Fig.~\ref{fig:time_syn}. Since most intervals are pruned at the group stage, \ouralg's pruning time grows almost linearly with $T$, while the time for pruning based on composite bounds without grouping (\ouralg-NG) grows faster, resulting in more than an order of magnitude savings at $T=1000$ due to grouping. Computing all-interval eigenvalues (Full) does not scale beyond $100$ time steps, requiring two orders of magnitude more time then composite and group pruning due to the expensive eigenvalue computations. 

We perform similar pruning evaluation for the Road Figs.~\ref{fig:pruning_road}, \ref{fig:time_road} and Internet Figs.~\ref{fig:pruning_net},\ref{fig:time_net} datasets. We prune more than $75\%$ of the intervals in Road and $55\%$ in Internet. Grouping is less effective here as there are multiple temporal intervals with good temporal conductance, though still providing increased effectiveness reflected in the widening gap between the running times of \ouralg and \ouralg-NG for Road Fig.\ref{fig:time_road}. The savings of grouping are smaller for Internet as there exist communities of hosts of low conductance persisting over most of the $2h$ span. Exhaustive computation of eigenvalues, in comparison is at least an order of magnitude slower for the longest timespans of both datasets. The pruning percentage in the Call dataset is more than $75\%$ when considering all $24$ time intervals, allowing for fast overall time regardless of its density (no figure due to the short $T=24$). 

We also study the effect of temporal normalization on pruning effectiveness. The temporal normalization $\eta(t,t')=(t'-t)^{-\alpha}$ is controlled by the exponent parameter $\alpha$, where higher values of $\alpha$ decrease the temporal conductance of longer intervals and thus make them more preferable. As we discussed earlier $\alpha=0$ amounts to no normalization and in this case solutions tend to reside in a single timestamp. Alternatively, if $\alpha$ is very large solutions that span the whole timeline are preferred, though for such settings, simply aggregating the graph over all timestamps and employing static graph solutions will work better. We show the effect of targeting medium interval-length solutions (the most challenging setting) on the effectiveness of our pruning in Fig.~\ref{fig:prunenorm}. The total pruning decreases from $98\%$ to $83\%$ for values of $\alpha$ up to $0.5$ and the fraction of both intervals pruned by the group bound (blue) and composite bound (red) reduce proportionally. Nevertheless, this level of pruning ensures significantly more efficient overall processing than employing hashing for all times and temporal scales in an exhaustive manner. 

A detailed visualization of the pruning effectiveness in Synthetic is presented as a heatmap in Fig.~\ref{fig:heat}. The space of all possible intervals is represented in a lower triangular matrix of pixels, where the pixel position encodes an interval start time (horizontal) and length (vertical axis). Grouping (darkest shade) prunes most of the long intervals and the majority of intervals of size less than $20$ are pruned by the composite interval bound. Employing the Full eigenvalue computation may prune only a small percentage of additional intervals at the cost of lengthy eigenvalue computations (lightest shade). Intervals of significant overlap with the injected community (times $20$-$30$) remain unpruned and considered for hashing.  

\vsb
\subsection{Scalability.}
\ouralg scales well with increasing timeline length $T$. Particularly, its pruning phase is very efficient as evident in Figs.~\ref{fig:time_syn},\ref{fig:time_net},\ref{fig:time_road}. The na{\"i}ve approach of directly calculating all possible Cheeger bounds is quadratic and quickly becomes infeasible. The use of pruning groups here is key in Synthetic, Road and Call---while composite bounding without groups (\ouralg-NG) is much better than the na{\"i}ve approach, only the full algorithm with pruning groups produces the near-linear scaling that is necessary for very long timelines. 

The last $4$ columns of Tab.~1 show the total running time and conductance of the best solutions for \ouralg and L-Metric for the full datasets. In all cases except the Road dataset, \ouralg completes much faster than the L-Metric: $10x$, $>6x$ and $7x$ faster for Synthetic ($|V|=1k$), Internet and Call respectively. L-Metric does not complete in more than $6$ hours on Internet ($T=120$). In all cases in which $L-Metric$ completes, the discovered communities are of significantly worse conductance: $2.5$ times worse in Synthetic and Road, and $67$ times worse in Call. The reason for this lower quality is that L-Metric considers only adjacent time steps when trying to reconcile communities in time, while \ouralg considers the full evolution of the graph at different scales.

Fig.~\ref{fig:syntheticScal} shows the complete running time of \ouralg algorithm---pruning, hashing, and refinement via random walks---versus competing techniques over increasing $T$ (Figs.~\ref{fig:ssyn},\ref{fig:scaleRoad},\ref{fig:scaleInternet}). Large numbers of vertices or time periods quickly render the exhaustive competitor methods infeasible, with runtimes of many hours or even days (estimated). The pruning and hashing segments of our approach scale well in both time and graph size; for large $T$, only about two percent of total execution time is spent on pruning, and the remainder on hashing and refinement. It is important to note that the running time of hashing and refinement can be reduced by considering smaller number of hash functions and bands at the expense of possibly worse-conductance results. A faster push-based local RW estimation, as described in~\cite{Andersen2006}, will enable scaling to larger network sizes as well. Our current implementation features only a naive full-network RWR, since scalable implementation of spectral sweeps is not the main focus of this work. L-Metric's running time grows quickly with $T$ and the graph size (Fig.~\ref{fig:scalegraph}) and is dominated by our approach on all datasets, but the small Road dataset in which it completes faster, but discovers a worse community (Tab.~1). \ouralg equipped with group pruning is the only alternative that scales with the graph size (Fig.~\ref{fig:scalegraph}) and can be further improved by trivial parallel implementation as discussed earlier. 

\subsection{Case studies: {\em Call\em} and {\em Internet\em}}

We obtained several low-conductance communities in the Milan telecom ({\em Call\em}) data; their locations are shown in Fig.~\ref{fig:Mmap}. Multiple small communities formed at various times during the day in the circled region; their size suggest they may simply be coincidental. However, a larger, more coherent community was discovered in the area marked with a rectangle. Because it appears near the A50 highway during the early morning hours, we speculate it may be the result of calls from vehicles in commuter traffic.

The dominant community in the Internet traffic data covered much of the timeline we examined. The nodes involved all represented major telecom companies in a variety of countries: Japan, Saudi Arabia, Korea, Israel, and the United Kingdom; the low conductance here appears to be because of extremely high traffic rates between what we speculate are backbone Internet providers.

\vsb
\subsection{Effect of parameters}
\vsa
We also evaluate the effect of hashing parameters on the quality of obtained seeds. Our experiment demonstrate that higher number of hashing bands and neighborhood hashing functions increases the quality of obtained seeds; however, we observe diminishing returns past $7$ bands in Synthetic. Details omitted due to space limitations.

\vsb
\section{Conclusions}
\vsa
We proposed the problem of local temporal communities with the goal of detecting a subset of nodes and a time interval in which the nodes interact exclusively with each other. We generalized the measure of conductance to the temporal context and proposed a method \ouralg for the minimum conductance temporal community. To scale the search in time we employed a novel spectral pruning approach that is sub-quadratic in the length of the total timeline. To scale the search in the graph space we proposed a time-and-graph locality sensitive family for neighborhoods of nodes which effectively spots cores of good communities in time. 

We evaluated \ouralg on both real and synthetic datasets and demonstrated that it scales better than alternatives to large instances, achieving two orders of magnitude running time reduction  in Synthetic and 3 to 7 times reduction on big real instances compared to alternatives. \ouralg also discovered communities of $2$ to $67$ times lower conductance than those obtained by a dynamic community baseline from the literature. This performance and accuracy is enabled by pruning as much as $95\%$ of the possible time intervals in Synthetic and between $55\%$ and $75\%$ in real-world datasets; and due to our effective temporal hashing scheme for spotting good seeds in unpruned intervals. 

\vsb
\bibliographystyle{plainnat}
\bibliography{ref}
\vsb
\section*{Appendix: Proofs of theoretical results.}

\noindent{\bf Proof of Theorem~\ref{thm:tlocal}: Temporal locality.}

\begin{proof}\vsb
Consider two timestamps, $t \leq t'$. Their hash values with respect to a $k$-partitioning $\tau^k(t)$ and $\tau^k(t')$ will differ only when $\exists p_i | t < p_i \leq t'$. Since every pivot $p_i$ was chosen uniformly at random from the full timeline, the probability that it falls between $t$ and $t'$ is proportional to the duration between the timepoints divided by the timeline length: $\frac{t' - t}{T}$. The probability that a specific pivot is not selected between the two timepoints is $1-\frac{t' - t}{T}$. The two hash values match if none of the $k$ pivots is selected between them, and the probability of this event is $(1-\frac{t' - t}{T})^k$ since all $p_i$ are chosen independently. When $t'-t \leq \Delta_1$ this probability is greater than $(1-\frac{\Delta_1}{T})^k$, and the analogous relationship holds for $t'-t \geq \Delta_2$.\vsb
\end{proof}

\noindent{\bf Proof of Theorem~\ref{thm:optk}: Optimal Number of pivots.}
\begin{proof}\vsb
Observe that the probability of a pivot landing in a perfect bracketing position (on the left or the right) is $\frac{1}{T}$. The probability of landing anywhere outside our target community is $1 - \frac{\Delta^*}{T}$. Any of the $\binom{k}{2}$ pairs of pivots could be the "bookends", and these could occur in either order (left-right or right-left). Thus, the probability of a perfect partition is: $$2\binom{k}{2}\big(\frac{1}{T}\big)^2\big(1 - \frac{\Delta^*}{T}\big)^{k-2} =
\big(\frac{1}{T}\big)^2\big(1 - \frac{\Delta^*}{T}\big)^{k-2}\big(k^2 - k\big)\mathrm{.}\vsb$$


Since we want to find the $k$ that maximizes the probability of perfect partition, we find $\frac{\partial p}{\partial k}$:
$$\big(1-\frac{\Delta^*}{T}\big)^{k-1}\big[(2k-1)(1-\frac{\Delta^*}{T})^{k-1}+
(k^2-k)\log(1-\frac{\Delta^*}{T})\big]$$
Since $0 < \frac{\Delta^*}{T} < 1$, one of the roots of $\frac{\partial p}{\partial k}=0$ is strictly smaller than $1$, and the only feasible solution is:
$$ k^* = \frac{\log(1-\frac{\Delta^*}{T}) - 2 - \sqrt{\log^2(1-\frac{\Delta^*}{T}) + 4}}{2 \log(1-\frac{\Delta^*}{T})}\mathrm{.}$$
This is well approximated by $\frac{2T}{\Delta^*}$ for $\frac{2T}{\Delta^*}\in[0,1]$.\end{proof}

\noindent{\bf Proof of Lemma~\ref{lem:dom}.}

\begin{proof}\vsb
We have a convex minimization function over a convex set: $\min f^TAf$, such that $f\perp (d+\epsilon)$. The Lagrange form is:
$L(f,\lambda)=f^TAf+\lambda f^T(d+\epsilon).$
Setting the gradient w.r.t. $f$ to zero gives us:
$\frac{\partial L}{\partial f} = 2Af + \lambda (d+\epsilon) = 0 \Rightarrow
f^*=-\frac{\lambda}{2}A^{-1}(d+\epsilon).$
Note that $A^{-1}$ exists since $A$ is positive semi-definite. By an analogous development, $g^*=-\frac{\lambda}{2} A^{-1}d.$ Next we substitute $f^*$ into the objective to obtain the inequality of interest:\vsb

\begin{align*}
\min_{f\perp d+\epsilon}f^TAf &=  {f^*}^TA{f^*} \\
                & =\frac{\lambda^2}{4}(A^{-1}(d+\epsilon))^TA(A^{-1}(d+\epsilon)) \\
                & = \frac{\lambda^2}{4}(d+\epsilon)^TA^{-1}(d+\epsilon)) \\
                & = \frac{\lambda^2}{4}\Big(d^TA^{-1}d + \epsilon^TA^{-1}d + d^TA^{-1}\epsilon + \epsilon^TA^{-1}\epsilon \Big) \\
                & \geq \frac{\lambda^2}{4} d^TA^{-1}d
                 = {g^*}^TAg^* 
                 = \min_{g\perp d} g^TAg
\end{align*} \end{proof}

\noindent{\bf Proof of Theorem~\ref{thm:sumbound}: Composite bound.}

\begin{proof}\vsb
Let $D_i$ denote the degree matrix of $G^{[t_i,t_i']}$ and $\hat{D}$ the degree matrix of $G^{[t,t']}$. According to the \emph{Min-max} theorem the second eigenvalue of $\hat{\mathcal{N}}$ can be characterized as the minimum of the Rayleigh quotient in the subspace orthogonal to the first eigenvector $\hat{d}^{1/2}$~\cite{Spielman2010}:
$\lambda_2(\hat{N}) = \min_{g\perp \hat{d}^{1/2}} \frac{g^T\hat{\mathcal{N}}g}{g^Tg}= \min_{f\perp \hat{d}} \frac{f^T\hat{\mathcal{L}}f}{f^T\hat{D}f},$
where $\hat{d}$ is a vector of the node volumes in $[t,t']$ and $f$ and $g$ are column vectors. The second equality is obtained by variable change $g=\hat{D}^{1/2}f$ and has the form of a generalized eigenvalue problem. We can then show:\vsb
    \begin{align*}\vsc
     \lambda_2(\hat{N}) & = \min_{f\perp \hat{d}} \frac{f^T\hat{\mathcal{L}}f}{f^T\hat{D}f}
                          = \min_{f\perp \hat{d}} \sum_{i=1}^{k}\frac{f^T\mathcal{L}_if}{f^T\hat{D}f} & (1)\\
                        &  = \min_{f\perp \hat{d}} \sum_{i=1}^{k}\frac{f^TD_if}{f^T\hat{D}f}f^T\mathcal{N}_if &(2) \\
                          & \geq \min_{f\perp \hat{d}} \sum_{i=1}^{k}\frac{f^TD_if}{f^T\hat{D}f}\lambda_2(\mathcal{N}_i) &(3)\\
                          & = \min_{g\perp \hat{d}^{1/2}} \sum_{i=1}^{k}\frac{g^T(\hat{D}^{-1/2}D_i\hat{D}^{-1/2})g}{g^Tg}\lambda_2(\mathcal{N}_i) &(4)\\
                            & = \min_{g\perp \hat{d}^{1/2}} \sum_{i=1}^{k}\frac{\sum_{u\in V}g_u^2\frac{vol(u,t_i,t_i')}{vol(u,t,t')}}{\Vert g\Vert^2}\lambda_2(\mathcal{N}_i) &(5)\\
                          & \geq \sum_{i=1}^{k} \min_{u\in V}\frac{vol(u,t_i,t_i')}{vol(u,t,t')}\lambda_2(\mathcal{N}_i) &(6) \vsb
    \end{align*}

For (1) we use the fact that $\hat{\mathcal{L}}=\sum_{i=1}^k\mathcal{L}_i$. In (2) we multiply all summands by $f^TD_if/f^TD_if$. (3) follows from Lem.~\ref{lem:dom} and 
in (4) we have substituted back the vector variables $g=\hat{D}^{1/2}f$. In (5) we have applied the definitions of the sub-interval volume for nodes and have unfolded the quadratic form for the diagonal matrix $g^T(\hat{D}^{-1/2}D_i\hat{D}^{-1/2})g$. Finally, since:
 \begin{align*}\vsb
\sum_{u\in V}g_u^2\frac{vol(u,t_i,t_i')}{vol(u,t,t')} & \geq =\Vert g\Vert^2min_{u\in V}\frac{vol(u,t_i,t_i')}{vol(u,t,t')},\vsb
\end{align*}
\noindent  we obtain the inequality in (6). 
\end{proof}

\noindent{\bf Proof of Theorem~\ref{thm:groupbound}: Group composite bound.}

\begin{proof}\vsb
Let $t^*$ be the end-point of one of the subintervals in the group, i.e. $t\leq t'\leq t^*\leq t''$. Then, since $\eta()$ is monotonically decreasing as long as $\alpha \geq 0$, we have $\eta(t,t^*) \geq \eta(t,t'')$. In addition, since the aggregated weights in super-intervals dominate those in sub-intervals we have that: $\min_{u\in V}\frac{vol(u,t_i,t_i^*)}{vol(u,t,t^*)} \geq \min_{u\in V}\frac{vol(u,t_i,t_i^*)}{vol(u,t,t'')}$. Using the above we have:\vsc\vsc

\begin{align*}\vsc
\uline{\phi_c}(G^{[t,t^*]}) & = \eta(t,t^*)\sum_{i=1}^{k} \min_{u\in V}\frac{vol(u,t_i,t_i^*)}{vol(u,t,t^*)}\lambda_2(\mathcal{N}_i)\\
                            & \geq \eta(t,t'')\sum_{i=1}^{k} \min_{u\in V}\frac{vol(u,t_i,t_i^*)}{vol(u,t,t'')}\lambda_2(\mathcal{N}_i)\\
                            & = \uline{\phi_c}(G^\tau)
\end{align*}
\end{proof}\vsc

\eat{
\section*{Implementation details} 
 We conduct all experiments on a $3.6$GHz Intel processor with $16$GB of RAM. All algorithms are implemented in Java. To compute eigenvalues, we employ the implementation of the Lanczos algorithm from the \emph{Matrix Toolkit Java}\footnote{Matrix Toolkit Java. Obtained from \url{https://github.com/fommil/matrix-toolkits-java.}}.
\begin{figure}[t]
\centering
\includegraphics[width=0.5\textwidth]{fig/true_positives_multiple.pdf}
\caption{Effect of hashing parameters $b$ and $r$ on locating injected temporal community. }\vsb\vsb
\label{fig:br}
\end{figure}

\section*{Effect of parameters}
In order to test the dependence on hashing parameters, we vary the contrast of the injected community in our synthetic dataset and measure the ability of our hashing scheme to detect seeds in the injected community in Fig.~\ref{fig:br}. A bucket is a true positive (TP) if it contains only temporal nodes from the injected community. We plot the fraction of TP buckets, termed true positive rate (TPR), as a function of the number of independent minhash neighborhood signatures $k$ for different number of signatures $b$ composed using an OR predicate, referred to as number of bands~\cite{ullman_mining_book} ($k$ is fixed to the theoretical optimal $k^*=20$ for the length of the injected community). Each data point represents the average of $50$ runs. As expected, the higher number of bands increases the TPR, however, there is a diminishing returns effect past $7$ bands. In low-contrast settings the TPR does not exceed $50$ and peaks at a specific number of neighborhood minhashes $r$, past which the probability of exact signature match in all bands decreases.  The best number of minhashes $r$ increases with the contrast, and so does the TPR reaching levels above $0.6$. TPR initially increase with the number of rows $r$, but past a certain peak value the number of FP increases. Similar behavior of optimal number of rows has been demonstrated in minhash-based LSH in the unweighted case~\cite{ullman_mining_book}. 

\eat{
\noindent {\bf Effect of temporal pivots}

theoretical k* is useful in practice in targeting communities of given length.

\begin{figure}[t]
\centering
\subfigure[][Effect of temporal norm. $\alpha$]
{
 \centering
  \includegraphics[width=0.3\textwidth]{fig/norm_stack/norm_stack.pdf}
  \label{fig:prunenorm}
}\hspace{-0.2in}
\subfigure[][Effect of temporal pivots $k$]
{
\centering
  \includegraphics[width=0.25\textwidth]{fig/pivots_truepos.pdf}
  \label{fig:tprk}
}
\caption{\subref{fig:prunenorm} Pruning power in synthetic data for different levels of temporal normalization $\alpha$. \subref{fig:tprk} TPR of collision buckets in the synthetic dataset as a function of the number of time pivots $k$ \todo{Redo figure}}
\end{figure}
}
}

\end{document}